\numberwithin{equation}{section}
\newcommand{\eps}{\varepsilon}
\newcommand{\gpf}{\mathcal{E}^{\mathrm{GP}}}
\newcommand{\lllf}{\mathcal{E}^{\mathrm{LLL}}}
\newcommand{\lllff}{\mathcal{F}^{\mathrm{LLL}}}
\newcommand{\lllee}{F^{\mathrm{LLL}}}
\newcommand{\tff}{\mathcal{E}^{\mathrm{TF}}}
\newcommand{\tfe}{E^{\mathrm{TF}}}
\newcommand{\tfm}{\rho^{\mathrm{TF}}}
\newcommand{\tfchem}{\mu^{\mathrm{TF}}}
\newcommand{\Z}{\mathbb{Z}}
\newcommand{\R}{\mathbb{R}}
\newcommand{\N}{\mathbb{N}}
\newcommand{\C}{\mathbb{C}}
\newcommand{\F}{\mathcal{F}}
\newcommand{\E}{\mathcal{E}}
\newcommand{\al}{\alpha}
\newcommand{\ep}{\varepsilon}
\newcommand{\Om}{\Omega}
\newcommand{\dd}{\partial}
\newcommand{\Fo}{\mathfrak{F}_{\ep}}
\newcommand{\Fos}{\mathfrak{F}_{\ep} ^s}
\newcommand{\ut}{u_{\tau}}
\newcommand{\ft}{f_{\tau}}
\newcommand{\fat}{f_{\al,\tau}}
\newcommand{\Pie}{\Pi_{\ep}}
\def\Xint#1{\mathchoice
   {\XXint\displaystyle\textstyle{#1}}%
   {\XXint\textstyle\scriptstyle{#1}}%
   {\XXint\scriptstyle\scriptscriptstyle{#1}}%
   {\XXint\scriptscriptstyle\scriptscriptstyle{#1}}%
   \!\int}
\def\XXint#1#2#3{{\setbox0=\hbox{$#1{#2#3}{\int}$}
     \vcenter{\hbox{$#2#3$}}\kern-.5\wd0}}
\newtheorem{teo}{Theorem}[section]
\newtheorem{lem}{Lemma}[section]
\newcounter{remark}[section]
\begin{document}


\title{Annular Bose-Einstein Condensates in the Lowest Landau Level}

\author{N. Rougerie	\\	\normalsize\it	Universit\'{e} Paris 6 \\ \normalsize\it Laboratoire Jacques-Louis Lions \\ \normalsize\it 4 place Jussieu, 75006 Paris, France \\ \normalsize\it \hspace{-.5 cm}}

\date{November 17, 2010}

\maketitle

\begin{abstract}
A rotating superfluid such as a Bose-Einstein condensate is usually described by the Gross-Pitaevskii (GP) model. An important issue is to determine from this model the properties of the quantized vortices that a superfluid nucleates when set into rotation. In this paper we address the minimization of a two dimensional GP energy functional describing a rotating annular Bose-Einstein condensate. In a certain limit it is physically relevant to restrict the minimization to the Lowest-Landau-Level, that is the first eigenspace of the Ginzburg-Landau operator. Taking the particular structure of this space into account we obtain theoretical results concerning the vortices of the condensate. We also compute the vortices' locations by a numerical minimization procedure. We find that they lie on a distorted lattice and that multiple quantized vortices appear in the central hole of low matter density.  \\

\vspace{0,2cm}
MSC: 35Q55,47J30,76M23. PACS: 03.75.Hh, 47.32.-y, 47.37.+q.
\vspace{0,2cm}
	
Keywords: Superfluidity, Gross-Pitaevskii energy, Bose-Einstein Condensates, Vortices.
\end{abstract}



\tableofcontents

\section{Introduction}\label{sec:Intro}

A Bose-Einstein condensate (BEC) is an object exhibiting quantum properties on a macroscopic scale. The effects of such phenomena as superfluidity can be observed experimentally on these systems, in particular the nucleation of quantized vortices in a rotating BEC (see for example the review \cite{Fe2} or \cite{MAH,MCW,ARV,RAV,HCE,ECH} for reports on original experiments). The most common description for a BEC is through a macroscopic wave function $\psi : \R^d \mapsto \C$ which modulus squared will give the matter density profile. Here $d=2,3$ and in this paper we will focus on a situation where it is physically justified to consider a two-dimensional BEC. For a condensate at equilibrium in the rotating frame, this wave function should minimize the following \emph{Gross-Pitaevskii functional}
\begin{equation}\label{EGP-} 
\gpf[\psi]=\int_{\R^2} \left( \frac{1}{2}\left| \nabla \psi -i\Omega x^{\perp}\psi \right|^2 +\left( V(x) - \frac{\Omega^2}{2} \vert x\vert ^2 \right) \vert \psi \vert^2 +\frac{G}{2}\vert \psi \vert ^4 \right) dx,
\end{equation}
under the mass constraint 
\begin{equation}\label{masseGP-}
\int_{\R ^2} \vert \psi \vert ^2=1.
\end{equation}
Here $x=(x_1,x_2)\in \R^2$, $x^{\perp}=(-x_2,x_1)$, $\Om$ is the angular velocity at which the condensate is rotated around the axis perpendicular to the plane $x_1,x_2$ and $G$ is a dimensionless coefficient characterizing the strength of atomic interactions. An essential feature of the experimental set-ups is the trapping potential, denoted by $V$ in (\ref{EGP-}). It is this potential that confines the atoms in a bounded region and prevents them from flying apart under the action of the centrifugal forces. For most experiments it takes the form (in the appropriate units that we have implicitly chosen in (\ref{EGP-}))
\begin{equation}\label{harmpot}
V(x) =  \frac{1}{2}|x| ^2.
\end{equation} 
Such a potential sets a limit to the rotational speed $\Om$ that one can impose to the condensate. Indeed, one can see that with $V$ chosen as above the second term in (\ref{EGP-}) is not bounded below if $\Om > 1$. Consequently the minimization problem makes no sense. Physically this corresponds to the fact that the centrifugal force overcomes the trapping force and drives the condensate out of the trap.\\
To prevent such a singular behavior in the limit $\Om \to 1$, Fetter \cite{Fe2} proposed to use instead a potential of the form 
\begin{equation}\label{quartpot}
V(x) = \frac{1}{2}|x| ^2 + \frac{k}{4}|x| ^4
\end{equation}
which, at least theoretically, allows for arbitrarily large rotation speeds $\Om$. Such a potential is a good approximation for those that have been used in the experiments (\cite{Exp1,Exp2}) and this is the one we shall consider in this paper. \\

An important experimental test for the superfluidity of a BEC is the observation of \emph{quantized vortices}, which are described in Gross-Pitaevskii theory as zeros of the wave function carrying a positive \emph{topological degree} or winding number (corresponding to the quantized phase circulation around the vortex). When a BEC is rotated, it will contain more and more vortices as the rotation speed is increased, until they become densely packed in the condensate when $\Om$ approaches $1$. Vortices repel each other, and to minimize this effect they will arrange on regular hexagonal lattices when $\Om \to 1$, for both the harmonic trap (\ref{harmpot}) and the `anharmonic' one (\ref{quartpot}). These Abrikosov lattices (named by analogy with the physics of type II superconductors, \cite{Abri}) have been observed experimentally, see e.g. \cite{ARV} for experiments with the trap (\ref{harmpot}) and  \cite{Exp1} for experiments with the trap (\ref{quartpot}). It should be noted that all vortices in these lattices are simply quantized, i.e. they have degree $1$.\\
A striking new feature of the trap (\ref{quartpot}) is the possibility to create annular condensates by increasing the rotation speed beyond the barrier that would be imposed by a weaker confinement. Indeed, for sufficiently large rotation speeds, the centrifugal forces will dip a hole in the center of the condensate. Interesting questions then arise about the vortex structure of the condensate. For example : does the hexagonal vortex lattice survive the formation of the hole ? When the hole has just formed the answer is yes, see \cite{FJS,FB,KB} for references in the physics literature and \cite{CY} for a mathematically rigorous treatment. At very large rotation speeds however, the vortices disappear from the bulk of the condensate, resulting in a so-called giant vortex phase, a situation that we do not address here and for which we refer to \cite{FJS,FB,KB,KF} for physical references and \cite{CRY,R} for a mathematical analysis.\\
The locations of vortices in the central hole of low matter density can also be investigated. It is commonly known (see e.g. \cite{FJS,KB,KTU}) that there should be a phase circulation around this hole, indicating that vortices lie within it. In regions of high matter density, it is most favorable for the vortices to have degree $1$, but it is far from obvious that it should also be the case in regions of low matter density. In particular, a question arising naturally \cite{ABD,FJS} is wether the vortices in the central hole of an annular Bose-Einstein condensate gather in a single multiply quantized vortex or form a pattern of simply quantized ones.\\

In this paper we address the case of a two-dimensional annular Bose-Einstein condensate rotating in a potential of the form (\ref{quartpot}). We will study the formation of the annular condensate by taking the limit $k\to 0$ of the problem, with the rotation speed $\Om$ properly scaled so as to capture the regime where the central hole is created under the influence of the centrifugal force. In such a parameter regime it is physically justified to consider a simplified variational problem that we describe in the next section. The very particular structure of the variational space considered allows to obtain interesting theoretical results. It also suggests a very natural numerical minimization method that we have used to confirm and extend our analytical results. The results we are going to present first appeared in a note intended for physicists \cite{Moi} to which we refer for further physical discussion (see also \cite{AM}).

\subsection{The GP energy and the LLL reduction}\label{sousec:setting}

We consider the GP energy functional 
\begin{equation}\label{EGP} 
\gpf[\psi]=\int_{\R^2} \left( \frac{1}{2}\left| \nabla \psi -i\Omega x^{\perp}\psi \right|^2 +\left( \frac{1-\Omega^2}{2} \vert x\vert ^2 +\frac{k}{4}\vert x\vert ^4 \right) \vert \psi \vert^2 +\frac{G}{2}\vert \psi \vert ^4 \right) dx,
\end{equation}
to be minimized under the mass constraint 
\begin{equation}\label{masseGP}
\int_{\R ^2} \vert \psi \vert ^2=1.
\end{equation}
The main idea of our analysis is to restrict the minimization of $\gpf$ to the first eigenspace of the Ginzburg-Landau operator $-\left(\nabla -i\Omega x^{\perp}\right)^2$, corresponding to the eigenvalue $\Omega$. This is the Lowest Landau Level (LLL), introduced in the context of Bose-Einstein condensation by Ho \cite{Ho}. \\
The use of such a simplification has originally been introduced for condensates rotating in traps of the form (\ref{harmpot}), i.e. for the functional 
\begin{equation}\label{quad quart 1: EGP quad}
\gpf[\psi]=\int_{\R^2} \left( \frac{1}{2}\left| \nabla \psi -i\Omega x^{\perp}\psi \right|^2 +\frac{1-\Omega^2}{2} \vert x\vert ^2  \vert \psi \vert^2 +\frac{G}{2}\vert \psi \vert ^4 \right) dx.
\end{equation}
Restricting the minimization to the LLL is then justified in the limit $\Om \to 1$. Indeed, the first term in the energy is at least equal to $\Om \sim 1$ since $\Om$ is the first eigenvalue of $-\left(\nabla -i\Omega x^{\perp}\right)^2$. On the other hand, the second term will obviously be small in the same limit. Thus, minimizing the sum of the second and the third term one also obtains something much smaller than $\Om$.\\
This indicates that the energy must stay close to $\Om$ when $\Om \rightarrow 1$. But the spectral gap between the first and the second eigenvalue of the Ginzburg-Landau operator is equal to $\Om$. To obtain an energy of order $\Om$ the projection of a minimizer of the functional (\ref{quad quart 1: EGP quad}) on the excited energy levels of $-\left(\nabla -i\Omega x^{\perp}\right)^2$ must then be negligible.\\
This formal justification has been the basis of several works on harmonically trapped BECs in the regime $\Om \to 1$, e.g. \cite{ABD,CKR,WBP}. A rigorous justification has been given in \cite{AB-2D}. For other mathematical developments on the functional (\ref{quad quart 1: EGP quad}) in the regime $\Om \rightarrow 1$ (Landau regime), we refer to \cite{AB,ABNmath}.\\

The LLL consists (see e.g \cite{LP}) of functions of the form $\psi(z)=f(z)e^{-\Omega \vert z \vert^2 /2}$ where $z$ is the complex variable $x_1+ix_2$ and $f$ is a holomorphic function. For an LLL function $\psi$, the Gross-Pitaevskii energy (\ref{EGP}) reduces to (we denote $dz = dx_1 dx_2$)
\[
\gpf[\psi] = \Omega + \int_{\C} \left( \left(\frac{1-\Omega^2}{2}\vert z \vert ^2  + \frac{k}{4} \vert z \vert ^4 \right)\vert \psi \vert ^2 + \frac{G}{2} \vert \psi \vert ^4\right)dz.
\]
In this paper we will rather consider functions of the form $\psi(z)=f(z)e^{-\vert z \vert^2 /2}$ (thus we effectively restrict the minimization to the first eigenspace of $-\left(\nabla -i x^{\perp}\right)^2$) to have a functional space independent of $\Om$. This approximation is harmless because, as we will see, our analysis is concerned with rotation speeds $\Om$ close to $1$. As for the energy, the difference is very small as showed by \cite[Lemma 3.1]{AB-2D} :  with $\psi$ of the form $\psi(z)=f(z)e^{-\vert z \vert^2 /2}$ with $f$ holomorphic we have
\begin{eqnarray}
\gpf[\psi] &=& \Omega +\lllf[\psi] \label{decompenergie} \\
\lllf[\psi]&=& \int_{\C} \left( \left( \left(1-\Omega \right)\vert z \vert ^2  + \frac{k}{4} \vert z \vert ^4 \right)\vert \psi \vert ^2 + \frac{G}{2} \vert \psi \vert ^4\right)dz, \label{ELLL}
\end{eqnarray}
so that we will minimize the energy $\lllf$ amongst functions of the form $\psi(z)=f(z)e^{-\vert z \vert^2 /2}$ with $f$ holomorphic, under the mass constraint $\int \vert \psi \vert ^2=1$. For this purpose we will use the mathematical framework introduced in \cite{ABNphi,ABNmath} for the study of the functional (\ref{quad quart 1: EGP quad}). See also \cite{Nie2} for the corresponding dynamical model.\\
Note that the validity of such a reduction could be investigated with the tools of \cite{AB-2D} (as well as the validity of the reduction to a two-dimensional model), but in this paper we shall be concerned only with the study of the reduced energy functional $\lllf$. We will provide in Subsection \ref{sousec:discu} an a posteriori criterion for the validity of the reduction. \\

We now describe the parameter regime that we shall consider. Let us define a small parameter 
\begin{equation}\label{epsilon}
 \varepsilon=k^{1/3}
\end{equation}
corresponding to a small anharmonicity regime and study the asymptotics of the problem as $\varepsilon \rightarrow 0$. We take $\Om$ satisfying
\begin{equation}\label{Omega}
\Omega = 1 + \beta k^{2/3} 
\end{equation}
and will consider $\beta$ and $G$ as fixed. This choice will lead to a functional with bounded coefficients (see (\ref{energieLLL}) below). In this regime the radius of the condensate is not bounded as a function of $\ep$. We thus rescale distances by making the change of variables 
\begin{equation}\label{scaling}
\phi(z)=\varepsilon ^{-1/2}\psi(\ep ^{-1/2}z)
\end{equation}
and for every $\phi$ we define the function 
\begin{equation}\label{chgtfunction}
f(z)=\phi(z)e^{\vert z \vert^2/2\varepsilon}.
\end{equation}
By definition of the LLL, $f$ belongs to the Fock-Bargmann space \cite{Bar}
\begin{equation}\label{Fock}
\Fo := \left\lbrace f \mbox{ holomorphic }, \int_{\C} \vert f\vert^2 e^{-\vert z \vert^2/\varepsilon}dz < \infty \right\rbrace. 
\end{equation}
The space $\Fo$ is a Hilbert space for the scalar product 
\begin{equation}\label{scalaireFoc} 
\left\langle f,g \right\rangle := \int_{\C} \overline{f(z)}g(z) e^{-\vert z \vert^2/\varepsilon}dz.
\end{equation}
The point of introducing such a space is that the orthogonal projector from $L^2 \left(\C, e^{-|z|^2/ \ep} dz \right)$ onto $\Fo$ is explicitly known \cite{Mar,Fol} and called the Szegö projector:
\begin{equation}\label{projecteur}
\Pi_{\varepsilon}(g)(z)=\frac{1}{\pi \varepsilon}\int_{\C}e^{z\bar{z'}/\varepsilon}e^{-\vert z' \vert^2 /\varepsilon} g(z')dz'.
\end{equation}
We will also use the spaces
\begin{equation}\label{Focks}
\Fos := \left\lbrace f \mbox{ holomorphic }, \int_{\C} \left(1+|z| ^2 \right) ^{s} \vert f \vert^2 e^{-\vert z \vert^2/\varepsilon}dz < \infty \right\rbrace 
\end{equation}
that we equip with the norms
\begin{equation}\label{normeFocks}
\Vert f \Vert_{\Fos} : = \left(\int_{\C} \left(1+|z| ^2 \right) ^{s} \vert f \vert^2 e^{-\vert z \vert^2/\varepsilon}dz \right) ^{1/2}.
\end{equation}

After the change of scale (\ref{scaling}) and the change of function (\ref{chgtfunction}), the energy becomes 
\begin{equation}\label{scaleenergie}
\lllf [\psi] = \varepsilon \lllff [f]
\end{equation}
with
\begin{equation}\label{energieLLL}
\lllff [f] =\int_{\C} \left(\left( -\beta \vert z \vert ^2  + \frac{1}{4} \vert z \vert ^4 \right)\vert f \vert ^2 e^{-\vert z \vert^2 /\varepsilon} + \frac{G}{2} \vert f \vert ^4 e^{-2\vert z \vert^2 /\varepsilon}\right)dz
\end{equation}
and we thus have to minimize (\ref{energieLLL}) under the mass constraint
\begin{equation}\label{masseLLL}
\Vert f \Vert_{\Fo}=1.
\end{equation}
It is this minimization problem that we will analyze in the sequel, namely we look at the problem
\begin{equation}\label{miniLLL}
\lllee := \inf \left\{ \lllff[f] \: \vert \: f\in \Fo ^1,\:  \Vert f \Vert_{\Fo} = 1 \right\}.
\end{equation}
In particular we will see that this problem admits a solution, i.e. the infimum above is actually a minimum. We are able, using the Szegö projector (\ref{projecteur}) to derive the Euler-Lagrange equation satisfied by any solution of the minimization problem. Using this equation and the very particular structure of the Fock-Bargmann space (in particular the constraint that $f$ has to be analytic) we are able to derive that any minimizer must have an infinite number of zeros (vortices) if $\eps$ is small enough. We also construct almost critical points, functions that are solutions to the Euler-Lagrange equation up to a small remainder term. Evaluating the energy of these almost critical points yields an upper bound to the energy that we believe is optimal although we are not able to prove it. This upper bound will be useful to discuss the validity of the LLL reduction in Subsection \ref{sousec:discu} below.

\subsection{Main Analytical Results}\label{sousec:results}

Let us start by showing that the problem we want to consider is actually well-posed :

\begin{teo}[\textbf{Well-posedness and Euler-Lagrange equation}]\mbox{}\\ \label{theo:minimisation}
For any $\ep > 0$ fixed, the problem (\ref{miniLLL}) admits a solution in $\Fo ^2$. Any minimizer is a solution to
\begin{equation}\label{EEL1}
-\beta M_{\varepsilon} f+\frac{1}{4}\left( M_{\varepsilon}^2 f + \varepsilon M_{\varepsilon} f\right)+G\Pi_{\varepsilon}(e^{-\vert z \vert^2 /\varepsilon}\vert f\vert^2 f)=\mu f 
\end{equation}
where $\mu$ is the Lagrange multiplier coming from the mass constraint and $M_{\varepsilon}$ is the operator defined by
\begin{equation}\label{defiMep}
 M_{\varepsilon} f = \varepsilon \partial_z \left( z f \right) .
\end{equation}
Alternatively, the Euler-Lagrange equation (\ref{EEL1}) may be written
\begin{equation}\label{EEL3}
(-\beta+\frac{1}{4}\varepsilon)\Pi_{\varepsilon}(\vert z \vert^2 f)+\frac{1}{4}\Pi_{\varepsilon}\left( \vert z\vert^2 \Pi_{\varepsilon}(\vert z\vert^2 f)\right)+G\Pi_{\varepsilon}(^{-\vert z \vert^2 /\varepsilon}\vert f\vert^2 f)=\mu f
\end{equation}
or
\begin{equation}\label{EEL2}
-\beta M_{\varepsilon} f+\frac{1}{4}\left( M_{\varepsilon}^2 f + \varepsilon M_{\varepsilon} f\right)+\frac{G}{2}\bar{f}(\varepsilon \partial_{z})[f^2(z/2)]=\mu f, 
\end{equation}
where the operator $\bar{f}(\varepsilon \partial_{z})$ is defined as
\[
 \bar{f}(\varepsilon \partial_{z})[g] := \sum ^{+\infty}_{k=0}\overline{a_k}(\ep \partial _z)^k g
\]
if $f(z)=\sum_{k=0} ^{+\infty} a_k z^k$.
\end{teo}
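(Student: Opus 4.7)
The plan is to establish existence by the direct method in the weighted Fock space $\Fos$, then derive (\ref{EEL1}) by taking complex Gateaux variations, and finally pass to the equivalent formulations (\ref{EEL3}) and (\ref{EEL2}) via explicit computations with the Szeg\"o projector $\Pie$.

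For existence, I would first complete the square $-\beta|z|^2+\tfrac14|z|^4=\tfrac14(|z|^2-2\beta)^2-\beta^2$, which shows that $\lllff$ is bounded below on the mass-normalized functions and that any minimizing sequence $(f_n)$ automatically satisfies a uniform bound on $\int(|z|^2-2\beta)^2|f_n|^2e^{-|z|^2/\ep}dz$, hence is bounded in $\Fo^4$. The Bargmann pointwise estimate $|f(z)|e^{-|z|^2/(2\ep)}\leq(\pi\ep)^{-1/2}\|f\|_{\Fo}$, coming from the reproducing kernel $K(z,z')=(\pi\ep)^{-1}e^{z\bar z'/\ep}$, combined with Montel's theorem for holomorphic functions, allows one to extract a subsequence that converges locally uniformly on $\C$ to a holomorphic limit $f$. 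The tail control furnished by the $\Fo^4$ bound upgrades this to strong convergence in $\Fo^2$, and via the pointwise estimate to convergence of the quartic term $\int|f_n|^4e^{-2|z|^2/\ep}dz$ by dominated convergence. Weak lower semicontinuity of each term and the preservation of the $\Fo$-mass constraint under strong convergence yield a minimizer $f\in\Fo^2$.

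To derive (\ref{EEL1}), I would perturb a minimizer $f$ by $f+t\eta$ for arbitrary $\eta\in\Fo$ and differentiate at $t=0$, treating independently the real and imaginary parts of $\eta$ to obtain the complex weak Euler--Lagrange condition
\[
\int_\C\bigl(-\beta|z|^2+\tfrac14|z|^4\bigr)\bar f\,\eta\,e^{-|z|^2/\ep}dz + G\int_\C|f|^2\bar f\,\eta\,e^{-2|z|^2/\ep}dz = \mu\int_\C\bar f\,\eta\,e^{-|z|^2/\ep}dz
\]
for every $\eta\in\Fo$, with $\mu\in\R$ the Lagrange multiplier. Since $\eta$ is holomorphic, each nonholomorphic factor on the left can be replaced by its Szeg\"o projection without changing the integral, giving the strong form $\Pie[(-\beta|z|^2+\tfrac14|z|^4)f]+G\,\Pie[e^{-|z|^2/\ep}|f|^2f]=\mu f$. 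The key commutation rule $\Pie[\bar z\,g]=\ep\partial_z g$ for $g\in\Fo$ is verified on the orthogonal monomial basis of $\Fo$ by computing $\Pie(\bar z z^k)=\ep k\,z^{k-1}$. Applied to $|z|^2f=\bar z(zf)$ with $zf\in\Fo$, this yields $\Pie[|z|^2f]=M_\ep f$; iterated on $|z|^4f=\bar z^{\,2}(z^2f)$, it gives $\Pie[|z|^4f]=(\ep\partial_z)^2(z^2f)$, which a short expansion identifies with $M_\ep^2f+\ep M_\ep f$, yielding (\ref{EEL1}). Form (\ref{EEL3}) follows by noting that $M_\ep^2f=\Pie[|z|^2\Pie(|z|^2f)]$ (because $M_\ep$ and $\Pie(|z|^2\cdot)$ coincide on holomorphic inputs) and then grouping the $\tfrac14\ep M_\ep f$ term with $-\beta M_\ep f$.

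Form (\ref{EEL2}) reduces to the identity $\Pie[e^{-|z|^2/\ep}|f|^2f]=\tfrac12\bar f(\ep\partial_z)[f^2(z/2)]$. I would prove it by inserting the series $e^{z\bar z'/\ep}=\sum_n(z\bar z')^n/(n!\ep^n)$ into formula (\ref{projecteur}), using the Taylor expansions $f=\sum_k a_k z^k$ and $f^2=\sum_m b_m z^m$, and computing the Gaussian moments $\int|z'|^{2N}e^{-2|z'|^2/\ep}dz'=\pi N!(\ep/2)^{N+1}$ arising from the doubled weight $e^{-2|z'|^2/\ep}$. Rearranging the resulting triple sum produces exactly the series defining $\tfrac12\bar f(\ep\partial_z)[f^2(z/2)]$, the halved argument $z/2$ and the prefactor $\tfrac12$ emerging from the Gaussian variance being $\ep/2$ rather than $\ep$. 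The main obstacle is this combinatorial bookkeeping: one must verify that the factors $2^{-(n+k)}$ coming from the Gaussian moments combine correctly with the derivative weights $\ep^k$ to reproduce precisely the operator $\bar f(\ep\partial_z)$ acting on $f^2(\cdot/2)$. The existence argument and the derivation of (\ref{EEL1})--(\ref{EEL3}) are otherwise standard variational calculus in weighted Fock spaces.
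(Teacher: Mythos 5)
Your derivation of the Euler--Lagrange equations is essentially the paper's: the paper also obtains (\ref{EEL1}) by computing the weak derivative and exploiting holomorphy of the test function (it phrases the commutation rule $\Pie(\bar z g)=\ep\partial_z g$ as an integration by parts rather than a computation on the monomial basis), it obtains (\ref{EEL3}) from $M_\ep f=\Pie(|z|^2 f)$ exactly as you do, and for (\ref{EEL2}) it simply cites Proposition 3.2 of Aftalion--Blanc--Nier, whereas you sketch the Gaussian-moment computation yourself; your bookkeeping is correct, since $\int_{\C}\bar z^{\,n+k}z^m e^{-2|z|^2/\ep}dz=\delta_{n+k,m}\,\pi m!(\ep/2)^{m+1}$ reproduces precisely the factor $\tfrac12\,2^{-m}$ in $\tfrac12\bar f(\ep\partial_z)[f^2(z/2)]$. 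For existence the routes genuinely differ: the paper gets compactness from the compact embeddings $\Fo^2\hookrightarrow\Fo^1\hookrightarrow\Fo$ (proved via the Bargmann transform) and handles the quartic term with Carlen's hypercontractivity estimate (\ref{hypercontract}), which gives boundedness in $A^4_\ep$, weak convergence there, and lower semicontinuity by convexity; your reproducing-kernel-plus-Montel argument with tail control is a more elementary, self-contained substitute that avoids both external inputs.

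Two points in your existence step need repair. First, completing the square controls $\int(|z|^2-2\beta)^2|f_n|^2e^{-|z|^2/\ep}dz$, which in the paper's convention (weight $(1+|z|^2)^s$) is an $\Fo^2$ bound, not an $\Fo^4$ bound; consequently you only get weak convergence in $\Fo^2$ and strong convergence in $\Fo^s$ for $s<2$, and the strong $\Fo^2$ convergence you invoke is not available. This does not sink the argument: the troublesome negative term $-\beta\int|z|^2|f_n|^2$ is absorbed into the nonnegative quadratic form $\tfrac14\int(|z|^2-2\beta)^2|f_n|^2e^{-|z|^2/\ep}dz-\beta^2$, which is weakly lower semicontinuous, while $-\beta^2$ is constant once the mass constraint passes to the limit (which your strong $\Fo$ convergence does give). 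Second, ``dominated convergence'' for $\int|f_n|^4e^{-2|z|^2/\ep}$ does not work as stated: the Bargmann pointwise bound only dominates the integrand by a constant, which is not integrable on $\C$. The fix is the same tail estimate, $\int_{|z|>R}|f_n|^4e^{-2|z|^2/\ep}\leq(\pi\ep)^{-1}\int_{|z|>R}|f_n|^2e^{-|z|^2/\ep}\leq CR^{-4}$, combined with locally uniform convergence on $B_R$ (or the paper's convexity argument, which only needs lower semicontinuity of this term). With these two adjustments your proof is complete.
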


Next we show that a minimizer of (\ref{miniLLL}) cannot have a finite number of zeros if $\ep$ is small enough. In particular, the minimum cannot be achieved by a polynomial.

\begin{teo}[\textbf{Infinite number of zeros}]\mbox{}\\ \label{theo:infivortex}
If $\ep$ is sufficiently small, any solution to the minimization problem (\ref{miniLLL}) has an infinite number of zeros.
\end{teo}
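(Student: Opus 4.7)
The plan is a proof by contradiction. I would assume $f$ is a minimizer of (\ref{miniLLL}) with only finitely many zeros in $\C$. Functions in $\Fo$ satisfy the reproducing-kernel estimate $|f(z)|^{2}\le (\pi\ep)^{-1}e^{|z|^2/\ep}\|f\|_{\Fo}^{2}$ and are therefore entire of order at most two; Hadamard's factorization theorem then forces
\begin{equation*}
f(z)=P(z)\,e^{az^2+bz+c},
\end{equation*}
with $P$ a polynomial of some degree $n$ and $|a|<1/(2\ep)$ (the strict inequality being necessary for $\|f\|_\Fo<+\infty$). The task is to rule out this ansatz for small $\ep$.

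The cleanest subcase is the polynomial one ($a=b=c=0$). Since $M_\ep$ acts diagonally on the monomial basis via $M_\ep z^{k}=\ep(k+1)z^{k}$, the linear part $-\beta M_\ep f+\tfrac14(M_\ep^{2}f+\ep M_\ep f)$ of the Euler--Lagrange equation (\ref{EEL1}) is a polynomial of degree $n$. The nonlinear term $G\,\Pi_\ep(e^{-|z|^{2}/\ep}|f|^{2}f)$ can be expanded through the explicit projection formula
\begin{equation*}
\Pi_\ep\bigl(e^{-|z|^{2}/\ep}\bar z^{k}z^{\ell}\bigr)=\frac{\ep^{k}\ell!}{2^{\ell+1}(\ell-k)!}\,z^{\ell-k}\qquad(\ell\ge k),
\end{equation*}
and is a priori a polynomial of degree up to $2n$. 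Matching the coefficients of $z^{2n-j}$ on both sides of (\ref{EEL1}) for $j=0,\dots,n-1$, and observing that by induction on $j$ only the contribution proportional to $\bar a_{j}a_{n}^{2}$ survives at step $j$, one forces $a_{0}=a_{1}=\cdots=a_{n-1}=0$. Hence any polynomial critical point must reduce to a monomial $f=a_{n}z^{n}$. A direct Gaussian integration together with the normalization $|a_{n}|^{2}\pi\ep^{n+1}n!=1$ and the optimal choice $n\sim 2\beta/\ep$ then gives
\begin{equation*}
\min_{n\in\N}\lllff[a_{n}z^{n}]\;\sim\;-\beta^{2}+\frac{G}{4\pi\sqrt{2\pi\beta\,\ep}}\xrightarrow[\ep\to 0]{}+\infty,
\end{equation*}
whereas the almost critical points announced at the end of Subsection~\ref{sousec:setting} furnish a trial state of uniformly bounded energy. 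No polynomial can therefore be a minimizer for $\ep$ small enough.

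The remaining Hadamard case $f=Pe^{Q}$ with $Q$ genuinely linear or quadratic would be handled by the same energy comparison: the density $\rho=|P|^{2}e^{2\mathrm{Re}(Q)-|z|^2/\ep}$ is a polynomial modulation of an anisotropic Gaussian whose principal widths are $\sqrt{\ep/(1\mp 2\ep|a|)}$, and a case split according to whether $|a|\ep$ is small, bounded away from $0$ and $1/2$, or close to $1/2$, would show that $\rho$ cannot simultaneously keep both $\int V\rho$ (with $V=-\beta|z|^2+|z|^4/4$) and $(G/2)\int\rho^{2}$ bounded as $\ep\to 0$. The hardest step will be the regime $|a|\ep\to 1/2$, where the Gaussian envelope is highly elongated and $P$ may have arbitrarily large degree, so that one must exploit jointly the analyticity of $P$ and the pointwise bound $\rho(z)\le(\pi\ep)^{-1}$ from the reproducing kernel to prevent an ad-hoc cancellation of the two divergences. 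Once this uniform lower bound is secured, the contradiction with the upper bound on $\llle$ is immediate and the theorem follows.
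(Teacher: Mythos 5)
Your handling of the polynomial case is correct and parallels the paper's own Step 2: the coefficient matching (degree $2n-j$ contributions reducing by induction to $\bar a_j a_n^2$) forces a polynomial critical point to be a monomial, and your projection formula $\Pi_{\varepsilon}(e^{-|z|^2/\ep}\bar z^k z^{\ell})=\ep^k \ell!\,z^{\ell-k}/(2^{\ell+1}(\ell-k)!)$ checks out. Your way of dispatching the monomial --- computing $\lllff[a_n z^n]$ directly and noting that its minimum over $n$ diverges while Theorem \ref{theo:pointscri} supplies competitors of bounded energy $\tfe_b+O(\ep^{1/4})$ --- is a clean alternative to the paper's route, which instead re-injects $a_nz^n$ into the Euler--Lagrange equation and derives a contradiction via a bound on the chemical potential $\mu$; both rely on the same upper bound from Theorem \ref{theo:pointscri}. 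One quantitative slip: your asymptotic $\min_n\lllff[a_nz^n]\sim-\beta^2+G/(4\pi\sqrt{2\pi\beta\ep})$ is the value at $n=2\beta/\ep$, not the minimum, which is attained at $n\ep\propto\ep^{-1/5}$ and is of order $\ep^{-2/5}$ (consistent with the paper's $n\sim\ep^{-6/5}$); the divergence, which is all you need, still holds.

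The genuine gap is the elimination of a nontrivial exponential factor, i.e. the case $f=P e^{az^2+bz+c}$ with $(a,b)\neq(0,0)$. You leave this as an admitted sketch, and the strategy you propose (a lower bound showing the energy of such a state diverges) is precisely the part you flag as hardest; in the elongated regime $|a|\ep\to 1/2$ a polynomial $P$ of arbitrarily large degree could conceivably rebalance the two terms, and you give no estimate that rules this out, so nothing is proved there. The paper avoids this difficulty entirely: it substitutes $f=Pe^{\alpha_1 z+\alpha_2 z^2}$ into the Euler--Lagrange equation (\ref{EEL1}) and compares the exponential growth rates of the linear terms (which preserve the factor $e^{\alpha_1 z+\alpha_2 z^2}$) with that of the projected nonlinear term, as in \cite{ABNmath}; the incompatibility of these growth rates forces $\alpha_1=\alpha_2=0$ outright. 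Since you have already derived the Euler--Lagrange equation, you should use it for this step as well; as written, your proposal establishes the theorem only for minimizers that are a priori polynomials.
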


We will discuss this result in further details in Subsection \ref{sousec:discu} and now describe the procedure we follow to construct approximate solutions to the Euler-Lagrange equation (\ref{EEL3}).\\

As is the case for harmonically trapped condensates in fast rotation, we expect that in the range of parameters we explore the scales of the problem will decouple. Namely we expect any minimizer of (\ref{energieLLL}) to be of the form $\alpha u$ where $u$ varies on the scale of the vortex pattern (which is small compared to the size of the condensate) and $\alpha$ is a slow varying profile giving the general shape of the condensate. The kind of profile $\alpha$ we have in mind is a Thomas-Fermi distribution. More precisely, $\al$ should look like a minimizer of the energy (\ref{energieLLL}) without the holomorphy constraint. Such a function has compact support and can thus not be in the LLL, but one can approach it by an LLL function which is an almost critical point for the energy (\ref{energieLLL}). We proceed as follow. Let us introduce 
\begin{equation}\label{Abriko}
  u _{\tau}(z)=e^{-{|z|^{2}/2\ep}}f _{\tau}(z),\quad
  f _{\tau}(z)=e^{ z^{2}/2\ep}\Theta\left(\sqrt{\frac{\tau_{I}}{\pi \ep}}z, \tau\right)
\end{equation}
where $\tau=\tau_R+i\tau_I$ is any complex number and 
\begin{equation}\label{eq:theta}
\Theta(v,\tau)=\frac{1}{i}\sum_{n=-\infty}^{+\infty}(-1)^{n}e^{i\pi\tau(n+1/2)^{2}}
e^{(2n+1)\pi iv}
\end{equation} 
is the Jacobi $\Theta$ function (more precisely the $\Theta_0$ function according to the classification of \cite{Nie1}). It has the property (see \cite{Cha} for more details) 
\begin{equation}\label{propTheta1}
\Theta(v+k+l\tau,\tau)=(-1)^{k+l}e^{-2i\pi lv} e^{-i\pi l\tau}\Theta(v,\tau).
\end{equation}
The interest of introducing such functions is twofold. Firstly it is known \cite{Cha} that any function $v$ whose modulus is periodic over the lattice $\sqrt{\frac{\pi\ep}{\tau_I}}\Z\oplus \sqrt{\frac{\pi\ep}{\tau_I}}\Z\tau$, vanishes exactly on the points of the lattice with simple zeros and such that $g=ve^{\vert z \vert^2/2\ep }$ is holomorphic must be proportional to $u_{\tau}$. Secondly, the function $f_{\tau}$ is a solution to the Abrikosov problem (see e.g. \cite{Abri} and \cite[Section 4]{ABNmath} for a detailed discussion)
\begin{equation}\label{fabri}
\Pie (|f_{\tau} |^2e^{-|z|^2/\ep}f_{\tau} )=\lambda_{\tau} f_{\tau} ,\mbox{ with } \lambda_{\tau}=\Xint - | u _{\tau} |^2 b(\tau),
\end{equation} 
and 
\begin{equation}\label{btau}
b(\tau)=\frac{\Xint - |u _{\tau}|^4}{\left( \Xint - |u _{\tau} |^2 \right) ^2} = \sum_{k,l\in \Z} e^{-\pi | k \tau -l|^2/\tau_I}.
\end{equation}
Here we denote $\Xint - v$ the average of a periodic function $v$. Equation (\ref{fabri}) is similar to (\ref{EEL3}) without the potential term and with $\mu=\lambda_{\tau}$ so that one can expect to obtain a solution of (\ref{EEL3}) by a slight modification of $f_{\tau}$. We refer to \cite{ABNphi,ABNmath} and references therein for details on the quantity $b(\tau)$. Let us just mention that it is minimum ($b(\tau)\sim 1.16$) for $\tau = e^{2i\pi / 3}$, which corresponds to a hexagonal lattice. From now on we take 
\[
\tau = e^{2i\pi / 3}
\]
and denote $b= b(e^{2i\pi / 3})$.\\
To obtain our approximate solutions we first multiply $\ft$ by a slow varying profile $\al$ with compact support and $\int_{\R ^2} \vert \alpha \vert ^2 =1$. This trial state would correspond to having a regular hexagonal lattice of vortices in the bulk of the condensate (that is, the support of $\al$). To obtain an admissible test function for the problem (\ref{miniLLL}) we project $\al \ft$ onto the Fock-Bargmann space and normalize it :
\begin{equation}\label{deff}
f_{\alpha,\tau}=\frac{\Pi_{\ep}(\alpha f_{\tau})}{\left\Vert \Pi_{\ep}(\alpha f_{\tau}) \right\Vert_{\Fo}}.  
\end{equation}
Accordingly
\begin{equation}\label{defu}
 u_{\alpha,\tau}=f_{\alpha,\tau}e^{-\vert z \vert^2 / 2\ep}.
\end{equation}
As for the choice of the appropriate $\al$, it turns out to be related to the following Thomas-Fermi energy functional :
\begin{equation}\label{TFfunctionb}
\tff_b [\rho] = \int_{\C} \left( \left( -\beta \vert z \vert ^2  + \frac{1}{4} \vert z \vert ^4 \right) \rho   + \frac{bG}{2}  \rho  ^2 \right)dz.
\end{equation}
The associated minimization problem is
\begin{equation}\label{Tfenergyb}
\tfe_b := \inf \left\{ \tff_b [\rho] \: , \: \rho \geq 0 ,\: \int_{\R ^2} |z| ^2 \rho < +\infty, \:  \int_{\R ^2}  \rho  = 1 \right\}.
\end{equation}
Its solution is unique and given as
\begin{equation}\label{tfdensity}
\tfm_b : = \max\left( \frac{\tfchem_b +\beta |z|^2-\frac{1}{4}|z|^4}{bG}, 0\right)
\end{equation}
where the chemical potential $\tfchem_b$ appears as a Lagrange multiplier in the equation associated to the minimization (\ref{Tfenergyb}) ans is thus fixed by the constraint $\int_{\R ^2}  \tfm_b  = 1$. Note that there exists a critical value $\beta_c$ such that the support of such a function becomes an annulus when $\beta > \beta_c$ (see the discussion in Section \ref{sousec:discu}). This is in contrast with the situation where $k=0$ (purely quadratic trap) where the support of the TF minimizer is always a disc.\\\\
We have the following result :

\begin{teo}[\textbf{Almost critical points}]\label{theo:pointscri}\mbox{}\\
Let $\alpha$ be a profile satisfying
\begin{equation}\label{tfprofile}
 \left| \alpha (z) \right|^2 = \tfm_b
\end{equation}
and $\fat$ be the associated function via formula (\ref{deff}). There holds
\begin{equation}\label{resultEq}
(-\beta+\frac{1}{4}\varepsilon)\Pi_{\varepsilon}(\vert z \vert^2 f_{\alpha,\tau})+\frac{1}{4}\Pi_{\varepsilon}\left( \vert z\vert^2 \Pi_{\varepsilon}(\vert z\vert^2 f_{\alpha,\tau})\right) + G\Pi_{\varepsilon}(e^{-\vert z \vert^2 /\varepsilon}\vert f_{\alpha,\tau}\vert^2 f_{\alpha,\tau})=
\tfchem_b f_{\alpha,\tau}+R_{\ep}
\end{equation}
where $\Vert R_{\ep} \Vert_{\Fo} \leq C \ep ^{1/4}$. Moreover
\begin{equation}\label{resultEn}
\lllff [f_{\alpha,\tau}] = \int_{\C} \left( \left( -\beta \vert z \vert ^2  + \frac{1}{4} \vert z \vert ^4 \right)\vert \al \vert^2  + \frac{bG}{2} \vert \al \vert ^4 \right)dz +O(\ep ^{1/4})
\end{equation}
and thus
\begin{equation}\label{upperbound}
\lllee \leq \lllff [\fat] = \tfe _ b + O(\ep ^{1/4}).
\end{equation}

\end{teo}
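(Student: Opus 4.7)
The plan is to exploit two ingredients: the function $\ft$ already satisfies the Abrikosov-type identity (\ref{fabri}), which takes care of the cubic nonlinearity, and $|\alpha|^2=\tfm_b$ varies on an order-one scale whereas $|\ut|^2$ oscillates on the mesoscopic scale $\sqrt{\ep}$. A local averaging argument then replaces $|\ut|^{2k}$ by its unit-cell average, equal to $1$ for $k=1$ and to $b$ for $k=2$ via (\ref{btau}), in any integral weighted by a slowly varying $|\alpha|^{2m}$.

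I would first analyse the projection $\Pie(\alpha\ft)$. Since the Szeg\H{o} kernel (\ref{projecteur}) is localized at scale $\sqrt{\ep}$, Taylor-expanding $\alpha(z')$ around $z$ inside the integral defining $\Pie$ shows $\Pie(\alpha\ft)=\alpha\ft+O(\sqrt{\ep})$ pointwise wherever $\alpha$ is smooth; the reduced exponent $\ep^{1/4}$ in the theorem arises from the H\"older-$\tfrac12$ behaviour of $\alpha=\sqrt{\tfm_b}$ across $\partial(\mathrm{supp}\,\tfm_b)$. Integrating $|\Pie(\alpha\ft)|^2 e^{-|z|^2/\ep}$ and applying a cell-averaging argument based on $\Xint-|\ut|^2=1$ yields
\[
\|\Pie(\alpha\ft)\|_{\Fo}^2=\int_\C |\alpha|^2 |\ut|^2\,dz+O(\ep^{1/4})=1+O(\ep^{1/4}),
\]
so the normalization constant in (\ref{deff}) equals $1+O(\ep^{1/4})$ and $\uat=\alpha\ut$ up to an $O(\ep^{1/4})$ remainder in $L^2(\C)$.

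The energy estimate (\ref{resultEn}) then follows from the same averaging applied to each term of $\lllff[\fat]$. Cutting $\C$ into cells of side $\sqrt{\ep}$ and freezing the slow factor $(-\beta|z|^2+\tfrac14|z|^4)|\alpha|^2$ at the cell center produces the bulk contribution $\int(-\beta|z|^2+\tfrac14|z|^4)|\alpha|^2\,dz$ with an $O(\ep^{1/4})$ error from cells intersecting $\partial(\mathrm{supp}\,\tfm_b)$. The analogous treatment for the interaction replaces $|\ut|^4$ by its cell average $b$, producing $\tfrac{bG}{2}\int|\alpha|^4\,dz$. Substituting $|\alpha|^2=\tfm_b$ and using the definition (\ref{Tfenergyb}) then yields (\ref{upperbound}) immediately.

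The main obstacle is the Fock-Bargmann residual estimate (\ref{resultEq}), which I would prove along the same lines applied at the operator level. For the potential terms, the identity $\Pie(|z|^2 g)=M_\ep g$ for holomorphic $g$ (from integration by parts against $e^{-|z|^2/\ep}$) together with the factorization $\fat\approx\alpha\ft$ shows that $\Pie(|z|^2\fat)$ acts as multiplication by $|z|^2$ on $\fat$ up to an $O(\ep^{1/4})$ remainder in $\Fo$, and similarly $\Pie(|z|^2\Pie(|z|^2\fat))$ acts as multiplication by $|z|^4$; the $\tfrac14\ep\,\Pie(|z|^2\fat)$ contribution is of order $\ep$ and absorbed into the error. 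For the cubic term I would write
\[
\Pie\bigl(e^{-|z|^2/\ep}|\fat|^2\fat\bigr)\approx \Pie\bigl(|\alpha|^2\alpha\cdot e^{-|z|^2/\ep}|\ft|^2\ft\bigr)
\]
and apply (\ref{fabri}) locally on each cell where $|\alpha|^2\alpha$ is essentially constant; averaging then produces the factor $b$ and reduces the expression to $b|\alpha|^2\fat$ modulo $O(\ep^{1/4})$. The left-hand side of (\ref{resultEq}) thus equals $\bigl(-\beta|z|^2+\tfrac14|z|^4+bG|\alpha|^2\bigr)\fat$ up to $O(\ep^{1/4})$ in $\Fo$; substituting $|\alpha|^2=\tfm_b$, the TF Euler-Lagrange relation $-\beta|z|^2+\tfrac14|z|^4+bG\,\tfm_b=\tfchem_b$ on $\{\tfm_b>0\}$ together with the concentration of $\fat$ on this set collapses the sum to $\tfchem_b\fat$, finishing the proof.
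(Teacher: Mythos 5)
Your proposal follows essentially the same route as the paper's proof: the projection estimate $\Pie(\al\ft)\approx\al\ft$ (which the paper imports as Lemma \ref{lem:ABN} from \cite{ABNmath} rather than re-deriving via kernel localization), homogenization of $|\ut|^2$ and $|\ut|^4$ against the slowly varying profile $\al$, the Abrikosov identity (\ref{fabri}) to handle the cubic term, and the Thomas--Fermi relation to produce the constant $\tfchem_b$. Two details to tighten when writing it out: the cancellation at the end is cleanest if the multiplier $-\beta|z|^2+\frac{1}{4}|z|^4+Gb|\al|^2-\tfchem_b$ is kept \emph{inside} the projection acting on $\al\ft$, where it vanishes identically because $\al=0$ off the TF support (so no ``concentration of $\fat$'' argument is needed, and one avoids the type mismatch of measuring a non-holomorphic function in $\Fo$); and the cell average $\Xint - |\ut|^2$ need not equal $1$, but this is harmless since the normalization $\Vert\Pie(\al\ft)\Vert_{\Fo}^2=\Xint - |\ut|^2\,(1+O(\ep^{1/4}))$ combines with $\Xint - |\ut|^4$ to give exactly the coefficient $b$ of (\ref{btau}).
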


\subsection{Discussion}\label{sousec:discu}

In this subsection we discuss some of the physical insights that one can obtain from our theorems, and some of the questions that can not be answered analytically, for which we will rely on numerical simulations.\\

Let us first comment a little bit more the ideas behind the construction leading to Theorem \ref{theo:pointscri}. It is known from the experiments that in the regime we consider, the condensate should contain a very large number of vortices. Theorem \ref{theo:infivortex} also suggests to use a trial function containing as many vortices as possible, although it does not give information on their locations.  On the other hand the trapping potential will force the condensate to live essentially in a bounded region, the complement of which will be a region of very low matter density. It is natural to think that, at least in the bulk, the vortices will form a regular hexagonal lattice. This is widely observed in experiments and numerical simulations.\\
Combining these two requirements, we arrive at a trial function of the form $c \al \ft$ where $\ft$ is defined in (\ref{Abriko}), $\al$ is a function with compact support and $c$ a normalization factor. Such a function is of course not in the Fock-Bargmann space : it has compact support and thus cannot be holomorphic. To obtain an admissible trial state we project it onto $\Fo$ using the Szegö projector. Computing the energy of such a trial function consists of essentially two steps (see Subsection \ref{sousec:pointscri} for the detailed proof). First, using a remarkable result from \cite{ABNmath} we have (see Lemma \ref{lem:ABN} below for a precise statement)
\[
\Pie (\al \ft) \sim \al \ft
\] 
when $\ep \rightarrow 0$. Thus the projection onto the Fock-Bargmann space has no effect on the energy to leading order and we obtain
\begin{multline*}
\lllff [f_{\alpha,\tau}] \sim\Vert \Pie (\al \ft) \Vert_{\Fo} ^{-2}  \int_{\C} 
 \left( -\beta \vert z \vert ^2  + \frac{1}{4} \vert z \vert ^4 \right)\vert \al \vert^2 |\ft| ^2 e^{-|z|^2/\ep} dz\\
+ \Vert \Pie (\al \ft) \Vert_{\Fo} ^{-4} \frac{G}{2} \int_{\C}  \vert \al \vert ^4 |\ft| ^4 e^{-2 |z|^2/\ep} dz.
\end{multline*}
Now, $|\ft|e^{-|z|^2 / 2 \ep} = |\ut|$ is periodic over a lattice whose period is very small ($\propto \ep ^{1/2}$), whereas $\al$ is chosen with a support of fixed size. Therefore one should expect a homogenization effect leading to 
\[
\lllff [f_{\alpha,\tau}] \sim  \Vert \Pie (\al \ft) \Vert_{\Fo} ^{-2} \Xint - |\ut| ^2 \int_{\C}  \left( -\beta \vert z \vert ^2  + \frac{1}{4} \vert z \vert ^4 \right)\vert \al \vert ^2  dz + \Vert \Pie (\al \ft) \Vert_{\Fo} ^{-4} \Xint - |\ut| ^4 \frac{G}{2} \int_{\C}  \vert \al \vert ^4dz.
\]
Computing the norm of $\Pie (\al \ft)$ uses the same ideas and leads to
\[
\Vert \Pie (\al \ft) \Vert_{\Fo} \sim \left( \Xint - |\ut| ^2 \right)^{1/2}.
\]
Finally, using (\ref{btau}),
\[
\lllff [f_{\alpha,\tau}] \sim \int_{\C} \left( \left( -\beta \vert z \vert ^2  + \frac{1}{4} \vert z \vert ^4 \right)\vert \al \vert ^2  +  \frac{bG}{2}  \vert \al \vert ^4\right)dz
\]
and one immediately sees that the optimal choice for $\al$ is the one we made in (\ref{tfprofile}). One should note that the only contribution of the vortex structure to the leading order of the energy is through the coefficient $b$, which depends on the type of lattice we choose and is minimum for the hexagonal lattice.\\
We do believe that this construction is optimal, but a proof of a lower bound matching (\ref{upperbound}) would probably require to show that the vortices of a true minimizer of the energy are located close to the sites of a regular lattice. This is a difficult question, linked to the crystallization problem, and remains a challenging open problem in more than one context, see \cite{ABNmath,SS} and references therein.\\
We have however the easy but non-optimal lower bound
\begin{equation}\label{lowerbound}
\tfe_1 \leq \lllee
\end{equation}
where $\tfe_1$ is defined by taking $b=1$ in (\ref{Tfenergyb}). This follows by minimizing $\lllff[f]$ with respect to $f e^{-|z|^2 / 2\ep}$, dropping the constraint that this function should be in the LLL. As $b$ is a constant, this lower bound confirms that $\tfe_b$ gives at least the order of magnitude of the energy. The gap between the upper and the lower bound lies in the coefficient $b$ that takes into account the density modification due to the presence of a large number of vortices. Filling this gap would be a first step towards a rigorous proof of the onset of the Abrikosov lattice in this regime.\\

Although we have no rigorous proof that the trial state (\ref{deff}) accurately describes the true state of affairs, it is useful for a physical discussion to give some details on this function. The analysis of a profile such as (\ref{tfdensity}) has already been carried out in \cite{FJS,ABD}, so we only adapt and summarize their results.\\
The critical rotation speed for the condensate to develop a central hole is 
\begin{equation}\label{Omegac}
\Omega _c = 1+\left(\frac{3k^2bG}{8\pi}\right)^{1/3}.
\end{equation}
For subcritical velocities, the behavior of the condensate is not qualitatively different from that of a harmonically trapped condensate, so we focus on velocities $\Omega \geq \Omega_{c}$. In our scaling (see (\ref{Omega})) this is equivalent to the requirement $\beta \geq (3bG)^{1/3}(8 \pi)^{-1/3}$. Then the inner and outer radii of the condensate $R_{\pm}$ are given by the relations
\begin{equation}\label{Rayons}
R_+ ^2 + R_- ^2 = 4\beta, \quad R_+ ^2 - R_- ^2 = \left(24bG\right) ^{1/3},
\end{equation}
the chemical potential is
\begin{equation}\label{mu}
\tfchem _b = \left(\frac{3bG}{8\pi}\right)^{2/3}-\beta^{2}
\end{equation}
and the energy is  
\begin{equation}\label{Energie}
\tfe_b =\frac{3}{5}\left(\frac{3bG}{8\pi}\right)^{2/3}-\beta ^2.
\end{equation}
In the case of a disc-shaped condensate, the order of magnitudes of the corresponding quantities is the same as above.\\
This information allows to check a posteriori the validity of our analysis. First, the LLL reduction should be valid if the energy $\lllee$ is much smaller than the spectral gap between the Lowest Landau Level and the first excited level. In our scaling, this gap is $\propto \ep^{-1} = k ^{-1/3}$, and thus from (\ref{upperbound}) and (\ref{Energie}) we deduce that the reduction is valid if 
\begin{equation}\label{reducLLL}
G \ll k ^{-1/2}, \: \beta \ll k^{-1/6}.
\end{equation}
One could probably show rigorously that the LLL reduction is valid under these conditions, working in the spirit of \cite{AB-2D}. Note that we have considered $\beta$ and $G$ as fixed in this paper so a rigorous proof would require to track the dependence on this parameters of our remainder terms. At least one condition will appear for sure if one wants to use our upper bound : we have used a (elementary) homogenization argument which requires  the number of vortices in the support of $\tfm_b$ to be large compared to $1$. Given that the vortices lie on a lattice of period $\propto \ep ^{1/2} = k  ^{1/6}$ and using (\ref{Rayons}) we arrive at the condition
\begin{equation}\label{reducHomogen}
 k\ll G.
\end{equation}
It is interesting to remark (tracking the dependence on $\beta$ and $G$ in the proof of Theorem \ref{theo:infivortex}) that under the above condition the number of vortices for the true minimizer should be infinite. We also note that, as we require both $k\ll G$ and $G\ll \frac{1}{\sqrt{k}}$, necessarily $k\ll 1$, which justifies our study of a small anharmonicity regime.\\

Finally, interesting questions arise about the vortex pattern displayed by a minimizer of the energy. If one believes that a minimizer is close to our trial state (\ref{deff}), then the property
\[
\Pie (\al \ft) \sim \al \ft
\]
that holds true in $L^{\infty}$ (amongst other topologies, see Lemma \ref{lem:ABN} below) indicates that the vortices should lie close to the sites of an hexagonal lattice in the region of significant matter density, that is the support of $\tfm_b$. Very little information however can be obtained about vortices lying in the region of low density.\\
These `invisible vortices' have nevertheless a contribution : as shown in \cite{AB,ABD}, they help create the average Thomas-Fermi density profile. Indeed, if the vortex pattern was regular in the whole complex plane, one would obtain a Gaussian density profile. More precisely it is to be expected that the regular vortex lattice will be distorted close to the edge of the condensate, resulting in invisible vortices that will help shape the condensate. An interesting question (see e.g. \cite{ABD,FJS,KTU}) is then, when $\Om$ is above $\Om_c$ and the condensate has an annular shape, what is the vortex distribution inside the hole ? In particular : do vortices gather in a central multi-quantized vortex ?\\
A nice feature of the LLL regime is that the condensate is completely determined by the location of its vortices, and one can thus minimize the energy numerically by varying the locations of the zeros of the wave-function. This gives a direct access to the optimal vortex pattern and allows to spot precisely the `invisible vortices'. A direct minimization of the Gross-Pitaevskii energy would not give such an information because one should a posteriori look for density dips in a region where the density is already very small.\\
We have adapted the numerical method of \cite{ABD} to the present setting and focused on the case where the condensate is annular. We refer to Section \ref{sec:numeric} for details. In particular we find invisible vortices both outside the outer edge of the condensate and in the central hole of low density, with multiply quantized vortices appearing at the center of the trap. We also find a good agreement between the optimal value of the energy numerically computed and the upper bound (\ref{upperbound}), thus giving another argument for the optimality of our construction.\\

The rest of the paper is organized as follows : in Section \ref{sec:proofs} we prove our analytical results, the proof of each theorem occupying a subsection. In Section \ref{sec:numeric} we describe our numerical method and present our results.

\section{Proofs}\label{sec:proofs}

\subsection{The Minimization Problem}\label{sousec:miniEL}

We prove Theorem \ref{theo:minimisation}. In this subsection, all parameters are considered as fixed.
 
\begin{proof}[Proof of Theorem \ref{theo:minimisation}]
We first note (see (\ref{lowerbound})) that the energy functional $\lllff$ is indeed bounded below in the minimization domain we have chosen (\ref{miniLLL}). \\
Our energy is defined only for $f\in \Fo ^2$, so we take the convention that $\lllff [f] = +\infty$ if $f\notin \F_{\ep} ^2$. Let $(f_n) _{n\in \N}$ be a minimizing sequence for (\ref{miniLLL}) and $u_n=f_n e^{-|z|^2/ \ep}$. We note that 
\[
\int_{\C} |z| ^2 |u_n| ^2 \leq \left(\int_{\C}  |u_n| ^2  \right)^{1/2} \left( \int_{\C} |z| ^4 |u_n| ^2 \right)^{1/2} = \left( \int_{\C} |z| ^4 |u_n| ^2 \right)^{1/2}
\]
because of the mass constraint. We deduce that  
\[
\lllff [f_n] \geq \frac{1}{4}  \int_{\C} |z| ^4 |u_n| ^2 dz - |\beta| \left( \int_{\C} |z| ^4 |u_n| ^2 dz \right)^{1/2} 
\]
and thus the sequence $(|z|^2 u_n)_n$ is bounded in $L^2$ whatever the sign of $\beta$. This implies that $(f_n)_n$ is bounded in $\Fo ^2$. In \cite{ABNmath} it is proved by using the Bargmann transform that $\Fo ^1$ is compactly embedded in $\Fo$. Similarly $\Fo ^2$ is compactly embedded \footnote{The Bargmann transforms maps $\Fos$ onto the  space $\left(1 -\dd_x ^2 + x ^2\right)^{-s}L ^2(\R)$.} in $\Fo ^1$. We thus have (possibly after extraction of a subsequence) the convergence of $f_n$ to some limit analytic function $f$, strongly in $\Fo^1$, strongly in $\Fo$ and weakly in $\Fo ^2$. We deduce that 
\[
\inf_n \int_{\C} \left( -\beta \vert z \vert ^2  + \frac{1}{4} \vert z \vert ^4 \right)\vert f_n \vert ^2 e^{-\vert z \vert^2 /\varepsilon} \geq \int_{\C} \left( -\beta \vert z \vert ^2  + \frac{1}{4} \vert z \vert ^4 \right)\vert f \vert ^2 e^{-\vert z \vert^2 /\varepsilon}. 
\] 
and
\[
\Vert f \Vert_{\Fo} = 1.
\]
On the other hand, denoting
\begin{equation}\label{normeAp}
\Vert f \Vert_{A^p _{\ep}} := \left( \frac{1}{\pi h}\int_{\C} |f(z)| ^p e^{-|z|^2/ \ep} dz \right) ^{1/p}
\end{equation}
it is proved in \cite[Theorem 4]{Car} (see also \cite[Section 2.2]{ABNmath}) that for any entire function $f$
\begin{equation}\label{hypercontract}
\Vert f (e^{-t} \cdot ) \Vert_{A^q _{\ep}} \leq  \Vert f  \Vert_{A^p _{\ep}},\mbox{ for any } 0<p<q \mbox{ such that } e^{-t} \leq \frac{p ^{1/2}}{q ^{1/2}}.
\end{equation}
where $f(e^{-t}\cdot)$ is the function mapping $z$ to $f(e^{-t}z)$. Taking $t = \frac{\log 2}{2}$, $p=2$ and $q=4$ we deduce that for any entire function $f$
\[
\left( \int _{\C} \vert f \vert ^4 e^{-2\vert z \vert^2 /\varepsilon}dz \right) ^{1/4} \leq C  \Vert f  \Vert_{\Fo}.
\]
Thus our sequence $(f_n)_n$ is also bounded in $A^4_{\ep}$ and we can assume (after a possible further extraction) that the convergence of $f_n$ to $f$ holds also in the weak topology of $A^4_{\ep}$. In particular, by convexity, there holds
\[
 \inf_n \int _{\C} \vert f_n \vert ^4 e^{-2\vert z \vert^2 /\varepsilon}dz \geq \int _{\C} \vert f \vert ^4 e^{-2\vert z \vert^2 /\varepsilon}dz 
\]
which concludes the proof that 
\[
\lllff [f] \leq \inf _n \lllff [f_n] 
\]
and thus the proof that $f$ minimizes $\lllff$ under the mass constraint.\\
We turn to the derivation of the Euler-Lagrange equation. The weak derivative of the functional $\lllff$ at $f$ along $g$ is given by
\begin{equation*}
D \F ^{\mathrm{LLL}}(f) \cdot g = \int_{\C} \left(\left( -\beta \vert z \vert ^2  + \frac{1}{4} \vert z \vert ^4 \right)\bar{f}g e^{-\vert z \vert^2 /\varepsilon} + \frac{G}{2} \vert f \vert ^2\bar{f}g e^{-2\vert z \vert^2 /\varepsilon}\right)dz.
\end{equation*}
Using an integration by parts and $\partial_{\bar{z}}f=\partial_{\bar{z}}g=0$ on the first term and the fact that $\Pi_{\varepsilon}(g)=g$ on the second term, we obtain (\ref{EEL1}).\\
Equation (\ref{EEL2}) is obtained from (\ref{EEL1}) exactly as in \cite[Proposition 3.2]{ABNmath} with some algebra on the non-linear term. To get (\ref{EEL3}) we use an integration by parts to show that $M_{\varepsilon}f=\Pi_{\varepsilon}(\vert z\vert^2 f)$. Then $M_{\varepsilon}^2f=\Pi_{\varepsilon}(\vert z\vert^2 \Pi_{\varepsilon}(\vert z\vert^2 f))$ and we get the result.\\

\end{proof}

\subsection{Infinite Number of Zeros}\label{sousec:zeros}

We now prove that any minimizer of (\ref{miniLLL}) has an infinite number of zeros. The argument is by contradiction and in two steps, using the two equivalent forms of the Euler-Lagrange equation (\ref{EEL1}) and (\ref{EEL2}).

\begin{proof}[Proof of Theorem \ref{theo:infivortex}]\mbox{}\\

\emph{Step 1.} Suppose $f$ has a finite number of zeros. Then one may write $f(z)=P(z)e^{\varphi(z)}$ where $P$ is a polynomial and $\varphi$ is a holomorphic function. Now $f\in \Fo$ and the condition $\int_{\C} \vert f\vert^2 e^{-\vert z \vert^2/\varepsilon}dz < \infty$ implies that $Re(\varphi(z))\leq \vert z\vert^2 /(2 \ep)$. It is well-known (see \cite{Boa} for example) that a holomorphic function can satisfy this condition only if it is a polynomial of degree less than 2. Therefore we know that 
\begin{equation}\label{f=Pephi}
 f(z)=P(z)e^{\alpha_1 z +\alpha_2 z^2}
\end{equation}
and the integrability condition on $f$ implies $\alpha_2 \leq 1/(2 \ep)$. Injecting (\ref{f=Pephi}) in (\ref{EEL1}) and comparing the exponential growth of the different terms of (\ref{EEL1}) as in \cite{ABNmath} yields $\alpha_1=\alpha_2=0$. So, if $f$ has a finite number of zeros, it is a polynomial.\\

\emph{Step 2.} Now, suppose $f$ is a polynomial of degree $n$ 
\[
f(z) = \sum_{k=0} ^n a_k z ^k
\]
and inject this in (\ref{EEL2}). The term 
\[
-\beta M_{\varepsilon} f+\frac{1}{4}\left( M_{\varepsilon}^2 f + \varepsilon M_{\varepsilon} f\right)
\]
 is a polynomial of degree $n$, by the very definition (\ref{defiMep}) of $M_{\ep}$. Therefore (\ref{EEL2}) implies that the term 
\[
\frac{G}{2}\bar{f}(\varepsilon \partial_{z})[f^2(z/2)] = \sum_{k=0} ^n (\ep \dd_z )^k [f^2 (z/2)]
\] 
is also of degree $n$. But $(\ep \partial _z )^k[f^2(z/2)]$ is of degree $2n-k$, so that it must be that $a_k=0$ for any $k<n$. Then $f$ is of the form 
\[
f(z)=a_n z^n,
\]
with 
\[
|a_n|^2 = \left( \pi \ep ^{n+1} n! \right) ^{-1}
\]
because of the normalization of $f$ in $\Fo$. Injecting this a last time in (\ref{EEL2}) yields
\[
-\beta (n+1)\ep + \frac{G}{2} \frac{(2 n)!}{\pi \ep 2^{2n +1}(n!)^2}-\mu + \frac{1}{4}(2\ep ^2 +n\ep ^2 + n(n-1)\ep ^2)= 0.
\]
Using the improved Stirling formula \cite{Rob} we obtain a condition on $n$:
\begin{equation}\label{n}
\mu + \beta \ep \geq -\beta n \ep + \frac{Ge^{-1/12}}{2\pi \ep \sqrt{n}} \\ 
+\frac{\ep^2}{2}+\frac{n^2 \ep^2}{4}+\frac{3n\ep^2}{4}.
\end{equation}
We now need to bound the chemical potential $\mu$ : taking the $\F_{\ep}$-scalar product of each side of (\ref{EEL1}) with $f$ yields
\begin{equation}\label{bornemu}
 \mu \leq 2\lllff[f] +\beta^2 - \frac{\beta \ep}{2}+\frac{\ep^2}{16}.
\end{equation}
Here we used the fact that the spectrum of $-\beta M_{\varepsilon} +\frac{1}{4}\left( M_{\varepsilon}^2  + \varepsilon M_{\varepsilon} \right)$ is bounded below by $-\beta^2 + \frac{\beta \ep}{2}-\frac{\ep^2}{16}$. This follows by noting that the spectrum of $M_{\ep}$ is constituted by the eigenvalues $(n+1)h$ with (non-normalized) eigenvectors $z^n$. We use the upper bound on $\lllff[f]$ of Theorem \ref{theo:pointscri} (see Subsection \ref{sousec:pointscri} for the proof) to deduce
\begin{equation}\label{n2}
\frac{6}{5}\left( \frac{3bG}{8\pi}\right)^{2/3} - \beta^2  - \frac{\beta \ep}{2}-\frac{7\ep^2}{16}
\geq -\beta n \ep + \frac{Ge^{-1/12}}{2\pi \ep \sqrt{n}}+\frac{n^2 \ep^2}{4}+\frac{3n\ep^2}{4}.
\end{equation}
Minimizing the right-hand side of (\ref{n2}) with respect to $n$ (taken as a continuous variable as it should be very large when $\ep$ is small) for fixed $\ep$ yields
\begin{equation}\label{eq:n3}
n \sim \left( \frac{Ge^{-1/12}}{2\pi} \right) ^{2/5}\ep ^{-6/5}
\end{equation}
and we deduce
\[
G ^{2/3} \geq C \left( G^{4/5} \ep ^{-2/5} - \beta G ^{2/5} \ep ^{-1/5} + \beta ^2 \right)
\]
which is a contradiction if $\ep$ is small enough.
\end{proof}

\subsection{Construction of Critical Points}\label{sousec:pointscri}

The proof of Theorem \ref{theo:pointscri} requires the following result taken from \cite{ABNmath}, that we quote for convenience. Here we abuse notations by taking the $\Fos$ norm of non-holomorphic functions.  

\begin{lem}[\textbf{ \cite{ABNmath} Estimates for $\fat$}]\label{lem:ABN}\mbox{}\\
Let $\gamma \in C^{0,1/2} (\C,\C)$ have compact support in $B_{R_0}$. If $g\in \Fos $ for some $s$, then $\Pie (\gamma g) \in \Fo ^{s'}$ for any $s' \in \R$ and
\begin{equation}\label{estim fat fock}
\left\Vert  \Pie (\gamma g) -\gamma g  \right\Vert_{\Fo ^{s'}} \leq  C_{s,s',R_0} \Vert \gamma \Vert_{C^{0,1/2}}\Vert g \Vert_{\Fo ^s} \ep^{1/4}.
\end{equation}
If $v = e^{-|z|^2 / \ep} g \in L^p (B_{R_0})$ for $1\leq p \leq +\infty$ then
\begin{equation}\label{estim fat Lp}
\left\Vert \left( \Pie (\gamma g) -\gamma g \right) e^{-|z|^2/\ep} \right\Vert_{L ^p} \leq  C_{p,R_0} \Vert \gamma \Vert_{C^{0,1/2}} \Vert v \Vert_{L^p (B_{R_0})}.
\end{equation}
\end{lem}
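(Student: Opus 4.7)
The plan is to derive everything from the reproducing-kernel identity combined with the H\"older $1/2$ regularity of $\gamma$. Since $g \in \Fo^s$ implies $g \in \Fo$ (after a localization argument, using that $(1+|z|^2)^s$ is bounded below and above on compact sets), I may apply the identity $\Pie g = g$ to obtain the crucial representation
\[
\Pie(\gamma g)(z) - \gamma(z) g(z) = \frac{1}{\pi\ep} \int_\C e^{z\bar{z'}/\ep - |z'|^2/\ep}\bigl(\gamma(z') - \gamma(z)\bigr) g(z') \, dz'.
\]
The essential point is that the factor $\gamma(z') - \gamma(z)$ vanishes whenever both $z$ and $z'$ lie outside $\operatorname{supp}\gamma \subset B_{R_0}$, which provides the localization needed to upgrade compactly supported estimates to global ones.

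Next I would convert this to a convolution form. Multiplying by $e^{-|z|^2/(2\ep)}$ and using the algebraic identity $z\bar{z'}/\ep - |z|^2/(2\ep) - |z'|^2/(2\ep) = -|z-z'|^2/(2\ep) + i\operatorname{Im}(z\bar{z'})/\ep$, the right-hand side becomes
\[
\frac{1}{\pi\ep} \int_\C e^{-|z-z'|^2/(2\ep)} e^{i\operatorname{Im}(z\bar{z'})/\ep}\bigl(\gamma(z') - \gamma(z)\bigr) G(z') \, dz',
\]
where $G(z) = g(z) e^{-|z|^2/(2\ep)}$ is the associated LLL physical wavefunction. Using $|\gamma(z') - \gamma(z)| \leq \|\gamma\|_{C^{0,1/2}} |z-z'|^{1/2}$ and discarding the unit-modulus phase, I arrive at the pointwise bound
\[
\bigl|[\Pie(\gamma g) - \gamma g](z)\bigr|\, e^{-|z|^2/(2\ep)} \leq \|\gamma\|_{C^{0,1/2}} \,(\phi_\ep * |G|)(z), \qquad \phi_\ep(z) := \frac{1}{\pi\ep}|z|^{1/2} e^{-|z|^2/(2\ep)}.
\]
A direct computation in polar coordinates yields $\|\phi_\ep\|_{L^1(\C)} \leq C\ep^{1/4}$, which is exactly the source of the small factor in (\ref{estim fat fock}).

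To conclude (\ref{estim fat fock}), I would split the integral defining $\|\Pie(\gamma g) - \gamma g\|_{\Fo^{s'}}^2$ into the near zone $\{|z| \leq 2R_0\}$ and the far zone $\{|z| > 2R_0\}$. On the near zone the polynomial weight $(1+|z|^2)^{s'}$ is controlled by a constant depending only on $s'$ and $R_0$, and Young's convolution inequality combined with $\|\phi_\ep\|_{L^1} \leq C\ep^{1/4}$ gives the desired bound in terms of $\|G\|_{L^2(\C)} = \|g\|_{\Fo} \leq C_{s,R_0} \|g\|_{\Fo^s}$ (the last inequality following from the fact that $g$ is holomorphic, so its $\Fo$ norm is controlled by its $\Fo^s$ norm up to equivalence of weights on $B_{R_0}$ and integration of a Gaussian factor outside). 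On the far zone one has $\gamma(z) = 0$, so only $z' \in B_{R_0}$ contributes, whence $|z-z'| \geq |z|/2$ and the Gaussian kernel provides decay of order $e^{-|z|^2/(8\ep)}$, which absorbs the weight $(1+|z|^2)^{s'}$ for any $s' \in \R$ and leaves a contribution smaller than any power of $\ep$. For (\ref{estim fat Lp}) the argument is identical but simpler: one uses only the trivial bound $|\gamma(z') - \gamma(z)| \leq 2\|\gamma\|_\infty \leq 2\|\gamma\|_{C^{0,1/2}}$ and Young's inequality with the bare Gaussian $\frac{1}{\pi\ep}e^{-|z-z'|^2/(2\ep)}$, whose $L^1$ norm is bounded uniformly in $\ep$; the support of $\gamma$ restricts the convolution variable to $B_{R_0}$, producing the norm $\|v\|_{L^p(B_{R_0})}$ on the right-hand side.

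The main technical obstacle is coordinating the compact support of $\gamma$ with the arbitrary polynomial weight $(1+|z|^2)^{s'}$: the convolution kernel $\phi_\ep$ lives at scale $\sqrt{\ep}$, so a naive Young estimate would dump the weight onto the (possibly large) norm $\|g\|_{\Fo^{s'}}$ rather than $\|g\|_{\Fo^s}$. It is precisely the near/far splitting dictated by $\operatorname{supp}\gamma$ that lets one replace the polynomial weight by a constant on the near zone and by Gaussian decay on the far zone, producing an estimate in which the index $s'$ on the right is replaced by the index $s$ prescribed by the regularity assumption on $g$.
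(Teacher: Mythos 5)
First, a point of reference: the paper does not prove this lemma at all --- it is quoted verbatim from \cite{ABNmath} ("that we quote for convenience"), so there is no in-paper proof to compare against. Your strategy is nonetheless the standard one behind the result in that reference: write $\Pie(\gamma g)-\gamma\,\Pie(g)$ as an integral against the Bargmann kernel, use the H\"older-$1/2$ modulus of continuity to produce a factor $|z-z'|^{1/2}$, and observe that $\|\,|z|^{1/2}e^{-|z|^2/2\ep}/(\pi\ep)\|_{L^1}\sim\ep^{1/4}$. That mechanism, and your identification of it as the unique source of the $\ep^{1/4}$, are correct.

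There is, however, a genuine gap in the near-zone step of your proof of (\ref{estim fat fock}): you invoke Young's inequality to bound the convolution by $\|\phi_\ep\|_{L^1}\|G\|_{L^2(\C)}$ and then assert $\|G\|_{L^2(\C)}=\|g\|_{\Fo}\leq C_{s,R_0}\|g\|_{\Fos}$. This last inequality is false for $s<0$ (test it on $g=z^n$ with $n\to\infty$), and $s<0$ is precisely the regime in which the lemma is used: the paper applies it with $g=\ft$, for which $|g|^2e^{-|z|^2/\ep}=|\ut|^2$ is periodic with positive mean, so $\|\ft\|_{\Fo}=+\infty$ and $\ft\in\Fos$ only for $s<-1$. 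As written, your near-zone estimate therefore returns $+\infty$ in the only application that matters. The whole content of the lemma is the passage from a \emph{negative} index $s$ on the input to an \emph{arbitrary} index $s'$ on the output, and your argument silently assumes $s\geq 0$. The repair is the same localization you correctly deploy in the far zone, but performed in the integration variable $z'$ rather than only in $z$: split $z'$ into $B_{2R_0}$, where $(1+|z'|^2)^{s}$ is comparable to a constant and hence $\|G\|_{L^2(B_{2R_0})}\leq C_{s,R_0}\|g\|_{\Fos}$, and $\C\setminus B_{2R_0}$, where $\gamma(z')=0$ so the integrand is nonzero only for $z\in B_{R_0}$, forcing $|z-z'|\geq R_0$; there the Gaussian kernel supplies a factor $e^{-cR_0^2/\ep}$ together with enough decay in $|z'|$ to absorb the weight $(1+|z'|^2)^{-s}$ after Cauchy--Schwarz, giving an $O(\ep^\infty)$ contribution. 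Your closing paragraph shows you understand that the support of $\gamma$ must be used to trade $s'$ for $s$, but the splitting you actually carry out acts only on the output variable and does not fix the divergent Young step. A similar (milder) caveat applies to your $L^p$ argument, where restricting "the convolution variable to $B_{R_0}$" again only accounts for the $\Pie(\gamma g)$ term and not for the reproducing-kernel representation of $\gamma(z)g(z)$, whose integrand is supported in $\{z\in B_{R_0}\}$ rather than $\{z'\in B_{R_0}\}$.
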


We now provide the

\begin{proof}[Proof of Theorem \ref{theo:pointscri}]

Let $\al \in C^{0,1/2} (\C,\C)$ have a fixed compact support and satisfy
\begin{equation}\label{masse alpha}
\int_{\C} |\al| ^2=1.
\end{equation}
We first claim that for any $\nu \in \R$
\begin{multline}\label{quasiequa1}
(-\beta+\frac{1}{4}\varepsilon)\Pi_{\varepsilon}(\vert z \vert^2 f_{\alpha,\tau})+\frac{1}{4}\Pi_{\varepsilon}\left( \vert z\vert^2 \Pi_{\varepsilon}(\vert z\vert^2 f_{\alpha,\tau})\right) + G\Pi_{\varepsilon}(e^{-\vert z \vert^2 /\varepsilon}\vert f_{\alpha,\tau}\vert^2 f_{\alpha,\tau})- \nu f_{\alpha,\tau} \\
= \Vert \Pie (\al \ft) \Vert_{\Fo} ^{-1} \Pie \left( \left( -\beta |z| ^2 + \frac{1}{4} |z|^4 + G b |\al| ^2 -\nu \right) \al \ft \right) +R_{\ep},
\end{multline}
with $\Vert R_{\ep} \Vert_{\Fo}\leq C \ep^{1/4}$. The proof consists in a repeated use of Lemma \ref{lem:ABN}.\\
Let $p$ be a polynomial. We estimate, using Cauchy-Schwarz and (\ref{estim fat fock}) applied with $g=\ft$ and $\gamma = \al$,
\[
\left| \int_{\C} \left(  |z| ^2 \al \ft - |z|^2  \Pie (\al \ft) \right)\bar{p} e^{-|z|^2/ \ep}dz \right| \leq C \Vert p \Vert_{\Fo} \Vert \Pie(\al \ft) - \al \ft \Vert_{\Fo ^2 }\leq C \ep ^{1/4} \Vert p \Vert_{\Fo}.
\]
Thus, by density of polynomial in $\Fo$
\begin{equation}\label{preuve points cri 1}
\left\Vert \Pi_{\varepsilon}(\vert z \vert^2 \Pie (\al \ft)) - \Pi_{\varepsilon}(\vert z \vert^2 \al \ft ) \right\Vert_{\Fo} \leq C\ep^{1/4}.
\end{equation}
Similarly, applying (\ref{estim fat fock}) with $\gamma = \al |z| ^2$
\[
\left| \int_{\C} \left(  |z| ^4 \al \ft - |z|^2  \Pie (\al |z| ^2  \ft) \right)\bar{p} e^{-|z|^2/ \ep}dz \right| \leq C \Vert p \Vert_{\Fo} \Vert \Pie(\al |z|^2 \ft) - \al |z|^2 \ft \Vert_{\Fo ^2 }\leq C \ep ^{1/4}  \Vert p \Vert_{\Fo}
\]
and
\begin{eqnarray*}
\left| \int_{\C} \left(  |z| ^2 \Pie (\al |z| ^2 \ft) - |z|^2  \Pie ( |z| ^2  \Pie (\al \ft)) \right)\bar{p} e^{-|z|^2/ \ep}dz \right| &\leq&   \Vert p \Vert_{\Fo} \left \Vert \Pie \left( \al |z|^2 \ft - |z|^2 \Pie (\al \ft)\right) \right\Vert_{\Fo ^2 } 
\\ &\leq&  \Vert p \Vert_{\Fo} \left\Vert \al  \ft -  \Pie (\al \ft) \right\Vert_{\Fo ^4 }\leq C \ep ^{1/4}  \Vert p \Vert_{\Fo}
\end{eqnarray*}
which implies
\begin{equation}\label{preuve points cri 2}
\left\Vert \Pi_{\varepsilon}\left( \vert z\vert^2 \Pi_{\varepsilon}(\vert z\vert^2 \Pie (\al \ft) )\right)  - \Pi_{\varepsilon}(\vert z \vert^4 \al \ft) \right\Vert_{\Fo} \leq C\ep^{1/4}.
\end{equation}
We turn to the non linear term in the equation. Using (\ref{fabri}) we have for any polynomial $p$
\begin{eqnarray}
 b \Xint - |\ut| ^2 \int_{\C} |\al| ^2 \al \Pie \left( e^{-|z|^2/\ep }  |\ft| ^2 \ft  \right) \bar{p} e ^{-|z|^2/\ep }dz &=& \int_{\C} |\al| ^2 \al \ft \bar{p} e ^{-|z|^2/\ep }dz \nonumber \\
&=& \int_{\C}  |\al| ^2 \al e^{-|z|^2/\ep }  |\ft| ^2 \ft \bar{p} e ^{-|z|^2/\ep } dz \nonumber \\
&+& O (\ep ^{1/4}) \label{use eq ft}.
\end{eqnarray}
The second line is a consequence of (\ref{estim fat fock}) applied with $\gamma = |\al| ^2 \al$ and $p=g$. On the other hand, using Lemma \ref{lem:ABN} again
\begin{multline*}
\left| \int_{\C}  \left( e^{-\vert z \vert^2 /\varepsilon}\vert \Pie (\al \ft) \vert^2 \Pie (\al \ft) - |\al| ^2 \al e^{-|z|^2/\ep }  |\ft| ^2 \ft \right) \bar{p} e ^{-|z|^2/\ep }dz \right| \\
\leq  \Vert p \Vert_{\Fo}  \Vert \Pie (\al \ft) \Vert_{\Fo} \Vert |\al| ^2 |\ut| ^2  - \left|\Pie (\al \ft) \right| ^2 e^{-|z|^2/\ep} \Vert_{L^{\infty}} 
\\ + \Vert p \Vert_{\Fo}  \Vert \al  \Vert_{L^{\infty}} ^2 \Vert \ut  \Vert_{L^{\infty}} ^2 \Vert \al \ft - \Pie (\al \ft) \Vert_{\Fo}  \leq   \ep ^{1/4} \Vert p \Vert_{\Fo} 
\end{multline*}
and thus
\begin{equation}\label{preuve points cri 3}
\left\Vert \Pi_{\varepsilon}(e^{-\vert z \vert^2 /\varepsilon}\vert \Pie (\al \ft) \vert^2 \Pie (\al \ft) - b \left(\Xint - |\ut| ^2\right) |\al| ^2  \al \ft \right\Vert_{\Fo} \leq C \ep ^{1/4}.
\end{equation}
Recall that 
\[
\fat = \frac{\Pie (\al \ft)}{\Vert \Pie (\al \ft) \Vert_{\Fo}}.
\]
We claim that
\begin{equation}\label{norme fat}
\Vert \Pie (\al \ft) \Vert_{\Fo} ^2 = \Xint - |\ut| ^2 \left( 1 + O (\ep ^{1/4})\right).
\end{equation}
Indeed, from Lemma \ref{lem:ABN} we have
\[
\Vert \Pie (\al \ft) \Vert_{\Fo} ^2  = \int_{\C} |\al| ^2 |\ut| ^2 + O(\ep ^{1/4}).
\]
On the other hand $| \ut| $ is periodic over a lattice of period proportional to $\ep ^{1/2}$, thus for any $\gamma \in C ^{\infty}(\C)$
\[
\left| \int_{\C} |\gamma| ^2 |\ut| ^2 - \Xint - |\ut| ^2 \int_{\C} |\gamma| ^2 \right| \leq C \Vert \gamma \Vert_{C^0 (\C)}
\]
and
\[
\left| \int_{\C} |\gamma| ^2 |\ut| ^2 - \Xint - |\ut| ^2 \int_{\C} |\gamma| ^2 \right| \leq C \ep ^{1/2} \Vert \gamma \Vert_{C^1 (\C)}.
\]
Using $\al \in C ^{0,1/2}$ and interpolating between the two previous estimates we obtain
\begin{equation}\label{homogene}
\int_{\C} |\al| ^2 |\ut| ^2 = \Xint - |\ut| ^2 \int_{\C} |\al| ^2 + O(\ep ^{1/4})
\end{equation}
and thus (\ref{norme fat}), recalling (\ref{masse alpha}). Gathering (\ref{preuve points cri 1}),(\ref{preuve points cri 2}),(\ref{preuve points cri 3}) and (\ref{norme fat}) we deduce  that (\ref{quasiequa1}) holds.\\
There only remains to choose $\al$ satisfying (\ref{tfprofile}) and $\nu = \tfchem_b$ to have
\[
\left( -\beta |z| ^2 + \frac{1}{4} |z|^4 + G b |\al| ^2 -\nu \right)\al = 0
\] 
and thus (\ref{resultEq}).\\
The proof of the energy result (\ref{resultEn}) uses the same tools. By definition
\[
\lllff [\fat] = \Vert \Pie (\al \ft) \Vert_{\Fo} ^ {-2} \int_{\C} \left( -\beta |z| ^2 + \frac{1}{4} |z| ^ 4 \right)  \left|\Pie (\al \ft) \right| ^2 e^{-|z|^2 / \ep} +  \Vert \Pie (\al \ft) \Vert_{\Fo} ^ {-4} \frac{G}{2} \int_{\C} \left|\Pie (\al \ft) \right| ^4 e^{- 2 |z|^2 / \ep}
\]
working as above, with the $L^p$ estimate (\ref{estim fat Lp}) instead of (\ref{estim fat fock}) we deduce
\begin{equation}\label{energie1}
\lllff [\fat] = \Vert \Pie (\al \ft) \Vert_{\Fo} ^ {-2} \int_{\C} \left( -\beta |z| ^2 + \frac{1}{4} |z| ^ 4 \right)  \left|\al \ut \right| ^2 +  \Vert \Pie (\al \ft) \Vert_{\Fo} ^ {-4} \frac{ G}{2} \int_{\C}\left|\al \ut \right| ^4 + O(\ep ^{1/4}).
\end{equation}
Next, using that $|\ut|$ is periodic over a lattice of period proportional to $\ep ^{1/2}$ as above (see the proof of (\ref{homogene})), we obtain
\begin{eqnarray}
\lllff [\fat] &=& \Vert \Pie (\al \ft) \Vert_{\Fo} ^ {-2} \Xint - |\ut|^2 \int_{\C} \left( -\beta |z|^2 + \frac{1}{4} |z|^4 \right)  \left|\al \right| ^2 \nonumber
\\&+&  \Vert \Pie (\al \ft) \Vert_{\Fo} ^ {-4} \Xint - |\ut|^ 4 \frac{G}{2} \int_{\C }\left|\al \right| ^4 + O(\ep ^{1/4}).
\end{eqnarray}
There only remains to recall (\ref{btau}) and (\ref{norme fat}) to conclude that (\ref{resultEn}) holds.


\end{proof}

\section{Numerical Simulations}\label{sec:numeric}

\subsection{Method}\label{sousec:numapproach}

We want to find a numerical approximation of the minimizer $\phi$ of $\E ^{LLL}$ in the LLL. We write 
\begin{equation}\label{eq:LLLnum}
 \phi(z)=P(z)e^{-\Omega \vert z \vert ^2 /2}
\end{equation}
with $P$ a holomorphic function. Since polynomials are dense in $\Fo$, it is reasonable to fix an integer $n$ and to restrict the analysis to functions $\phi$ where $P$ is a polynomial of degree less than $n$ (see \cite[Section 6]{ABNmath} for mathematical results on the validity of this approximation). We write our trial functions as in \cite{ABD}:
\begin{equation}\label{eq:ftest}
\phi(z)=A \prod ^n _{j=1} (z-z_j)e^{-\Omega \vert z \vert ^2 /2},
\end{equation}
where 
\[
A=\Vert \phi \Vert _{L^2}^{-1/2} 
\]
is the normalization factor. We will numerically vary the locations $z_j$ of the vortices. An alternative method (used for example in \cite{CKR} for a harmonically trapped condensate) would be to take 
\begin{equation*}
\phi(z)=A\left( \sum ^n _{j=0} b_j z^j \right) e^{-\Omega \vert z \vert ^2 /2}
\end{equation*}
and vary the coefficients $b_j$. The interest of our approach is to give a direct access to the exact repartition of vortices, whereas the alternative method would require to compute the roots of a polynomial of degree $n$, which is a delicate task for large $n$. In particular, varying the coefficients could probably not give the precise locations of invisible vortices located in regions of low matter density. Note also that there exist some good methods to directly minimize the GP energy (see \cite{DH,DK} for recent developments), but they probably could not give access to the locations of invisible vortices neither. Indeed, one would have to spot zeros or phase singularities in a region where the density is very small.\\  
The point here is that these vortices, although invisible, have a crucial influence on both the energy and the density profile of the condensate. One can be skeptical about the existence of density dips in a region where there is so to say no matter at all. But a vortex is also carrying a phase circulation (or superfluid current) and thus one can interpret the presence of invisible vortices as the existence of currents in the condensate that are equivalent to those generated by vortices in the low density region. Taking invisible vortex into account with a trial function of the form (\ref{eq:ftest}) is thus a way to evaluate more precisely the effect of superfluid currents in the condensate.\\ 
We used a conjugate gradient method with a Goldstein and Price line-search. All the functions whose integrals have to be computed are of the form Polynomial $\times$ Gaussian. We thus used the Gauss-Hermite method, and we took enough Gauss points for the computations to be exact. This results in quite expensive calculations, but we have been able to numerically construct condensates with up to $\sim 120$ vortices.

\subsection{Results}\label{sousec:numresult}

\begin{figure}[h]
\begin{centering}
\includegraphics[width=120mm]{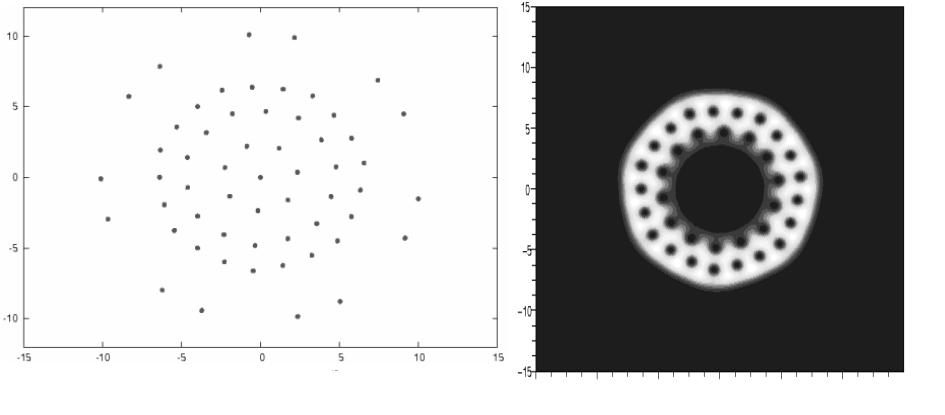}
\caption{Vortex structure and atomic density for $G=3, k=10^{-4}, \beta=1 \: (\Omega-1 = 2.2 \, 10^{-3})$. There are 67 vortices in total, the central vortex is constituted of 11 single vortices. \label{fig1}}
\end{centering}
\end{figure}

We show in Fig.\ref{fig1} and Fig.\ref{fig2} typical examples of configurations we numerically computed. The qualitative features of the vortex patterns and atomic densities confirm our theoretical results and are in good agreement with existing theoretical and numerical studies \cite{FJS,KB}. Note however that the numerics become quite intricate for large number of vortices, which accounts for the relative lack of symmetry of Fig.\ref{fig2}.\\
\begin{figure}[h]
\begin{centering}
\includegraphics[width=120mm]{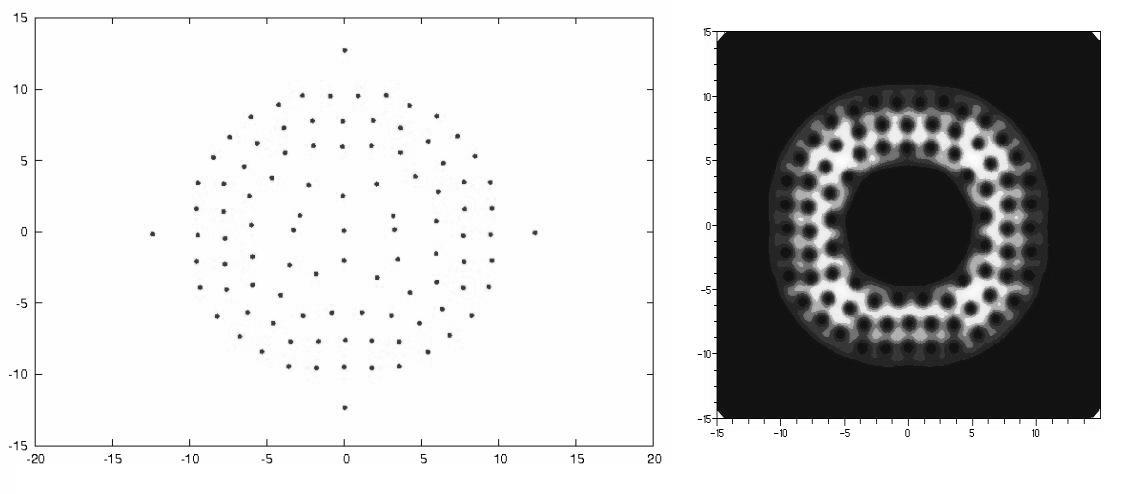}
\caption{Vortex structure and atomic density for $G=3, k=10^{-5}, \beta=1 \: (\Omega-1 = 4.6 \, 10^{-4})$. There are 119 vortices in total, the central vortex is constituted of 20 single vortices. \label{fig2}}
\end{centering}
\end{figure}
\begin{figure}[h]
\begin{centering}
\includegraphics[width=120mm]{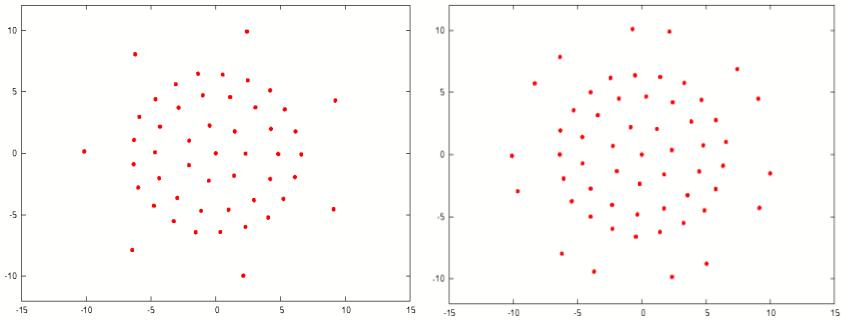}
\caption{Two example of vortex configurations for $G=3, k=10^{-4}, \beta=1 \: (\Omega-1 = 2.2 \, 10^{-3})$, respectively with $n=60$ and $n=67$ vortices. \label{fig3}}
\end{centering}
\end{figure}
As was expected, the condensate develops a central hole and visible vortices are regularly distributed and densely packed in the annular region of significant atomic density. The hexagonal lattice is not clearly observed however, which seems to be due to the competition between the annular geometry of the condensate and the repulsion between vortices. The former tends to force vortices to form concentric circles while the latter favors the hexagonal lattice. This effect is apparent in several other simulations of rotating annular BECs, see \cite{AM,FJS} for 2D simulations and \cite{AD,D} for 3D ones. To observe clearly the hexagonal lattice one would need the annulus to be thicker so that it could contain more vortices. This would make the numerics all the more challenging.\\   

Our computations also show a distortion of the vortex pattern near the external radius of the condensate, resulting in invisible vortices in the exterior region of low atomic density as is the case for harmonically trapped condensates \cite{ABD}. Some vortices also lie in the central hole as theoretically predicted and we can get information on their precise locations: we observe a distortion of the regular lattice near the inner radius of the condensate, resulting in isolated singly-quantized vortices encircling a central multiply quantized vortex. We computed configurations for which this central vortex has up to $20$ units of circulation while there is a total of $32$ units of circulation in the entire hole and $83$ visible vortices. The number of vortices (both visible and invisible) increases with increasing $\Omega$ or $\beta$, but since our scaling does not allow to explore a large domain of $\Omega$ when $G$ is fixed, we mainly varied $k$. The total vorticity of the system increases with decreasing $k$.\\

All vortices do not have the same contribution to the energy: as $n$ increases, the vortex pattern in the annular region of significant atomic density remains the same up to possible rotations, with the additional vortices first gathering in the central vortex, then constituting the distorted lattice near the inner boundary of the condensate and finally occupying the distorted sites beyond the external radius. With increasing $n$ (see Fig.\ref{fig4}), the energy reaches a first plateau when the central figure is formed, constituted of the visible vortices and the vortices in the central hole. A second plateau is reached when enough distorted sites beyond the external radius are occupied. For example, for the parameters corresponding to Fig.\ref{fig1} and Fig.\ref{fig4} ($k=10^{-4},\: G=3, \: \beta=1$), the energy varies by $\sim \pm 10^{-6}$ in relative value when $n$ is increased from $60$ to $67$ (Fig.\ref{fig3}) and the atomic density does not vary significantly. We find a good agreement between our numerical results and the upper bound to the energy computed in Theorem \ref{theo:pointscri}. The energy of our analytical trial function typically differs by $\sim 10^{-3}$ to $\sim 10^{-2}$ in relative value from the energy numerically computed. The agreement between the radii theoretically predicted (\ref{Rayons}) and extracted from the numerical simulations is typically a little worse but still satisfactory (relative errors ranging from $10^{-2}$ to $10^{-1}$). Again, obtaining a better agreement would probably require to consider a thicker annulus containing more vortices. This would result in more difficult simulations.

\begin{figure}[ht]
\begin{centering}
\includegraphics[width=120mm]{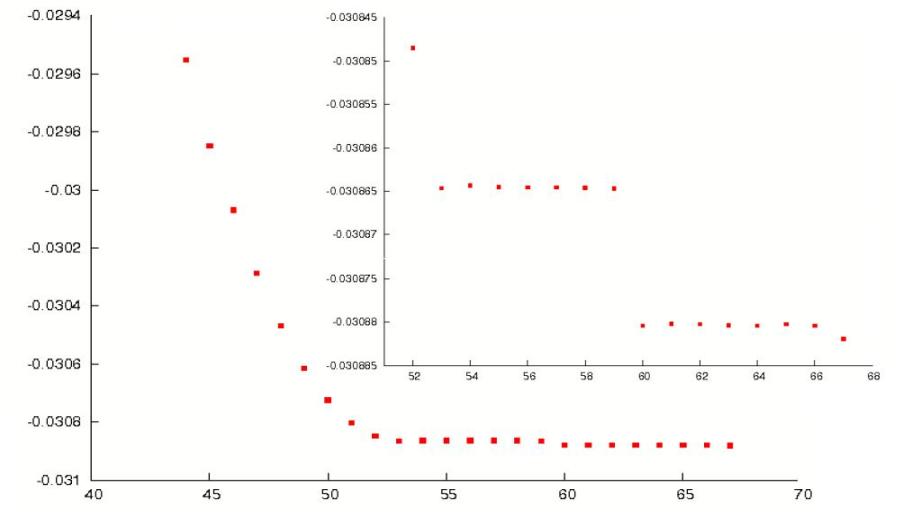}
\caption{Minimum energy as a function of the number of vortices in the trial wave function ($G=3, k=10^{-4}, \beta=1$). \label{fig4}}
\end{centering}
\end{figure}


\begin{thebibliography}{aaa99}\nocite{*}

\bibitem[ARV]{ARV}	\textsc{J.R. Abo-Shaeer, C. Raman, J.M. Vogels, W. Ketterle}, Observation of Vortex Lattices in Bose-Einstein Condensates,  {\it Science} {\bf 292} 476 (2001).

\bibitem[Abr]{Abri} \textsc{A. Abrikosov}, Nobel Lecture: Type-II Superconductors and the Vortex Lattice, {\it Rev. Mod. Phys.} \textbf{76}, 975 (2004).

\bibitem[AB1]{AB} \textsc{A. Aftation, X. Blanc}, Vortex Lattices in Rotating Bose-Einstein Condensates, \textit{SIAM Journal of Mathematical Analysis} \textbf{38}, 874 (2006)

\bibitem[AB2]{AB-2D} \textsc{A. Aftalion, X. Blanc}, Reduced Energy Functionals for a Three Dimensional Fast Rotating Bose-Einstein Condensate, \textit{Annales de l'Institut Henri Poincare (C) : Non Linear Analysis} 339-355 (2008)

\bibitem[ABN1]{ABNphi} \textsc{A. Aftalion, X. Blanc, F. Nier}, Vortex Distribution in the Lowest Landau Level,\textit{ Phys. Rev. A} \textbf{73}, 011601(R) (2006).

\bibitem[ABN2]{ABNmath} \textsc{A. Aftalion, X. Blanc, F. Nier}, Lowest Landau Level Functionals and Bargmann Transform in Bose-Einstein Condensates, \textit{Journ. Funct. Anal.} \textbf{241(2)}, 661 (2006).

\bibitem[ABD]{ABD}  \textsc{A. Aftalion, X. Blanc, J. Dalibard}, Vortex Patterns in a Fast Rotating Bose-Einstein Condensate, \textit{ Phys. Rev. A} \textbf{71}, 023611 (2005).

\bibitem[AD]{AD} \textsc{A. Aftalion, I. Danaila}, Giant Vortices in Combined Harmonic and Quartic traps, \textit{ Phys. Rev. A} \textbf{69}, 033608 (2004).

\bibitem[AM]{AM} \textsc{A. Aftalion, P. Mason}, Rotation of a Bose-Einstein condensate held under a toroidal trap, \textit{Phys. Rev. A} \textbf{81}, 023607 (2010). 

\bibitem[Bar]{Bar} \textsc{V. Bargmann}, On a Hilbert Space of Analytic Functions and an Associated Integral Transform, \textit{Comm. Pure. Appl. Math.} \textbf{14}, 187--214 (1961).

\bibitem[BR]{Moi} \textsc{X. Blanc, N.Rougerie}, Lowest-Landau-Level vortex structure of a Bose-Einstein condensate rotating in a harmonic plus quartic trap, \textit{Physical Review A}, \textbf{77}, 053615 (2008).

\bibitem[Boa]{Boa} \textsc{R.P. Boas Jr}, \textit{Entire Functions}, Academic Press, New York, 1954.

\bibitem[BSS]{Exp1}\textsc{V. Bretin, S. Stock, Y. Seurin and J. Dalibard}, Fast Rotation of a Bose-Einstein Condensate, {\it Phys. Rev. Lett.} {\bf 92} 050403 (2004).

\bibitem[Car]{Car}\textsc{E. Carlen}, Some Integral Identities and Inequalities for Entire Functions and their Applications to the Coherent State Transform, \textit{Journ. Funct. Anal.} \textbf{97} (2001).

\bibitem[Cha]{Cha} \textsc{K. Chandrasekharan}, \textit{Elliptic Functions}, Springer, Berlin, 1985.

\bibitem[CKR]{CKR} \textsc{N.R. Cooper, S.Komineas, N.Read}, Vortex Lattices in the Lowest Landau Level for Confined Bose-Einstein Condensates, {\it Phys. Rev. A} \textbf{70}, 033604 (2004).

\bibitem[CRY]{CRY} 		\textsc{M. Correggi, N. Rougerie, J. Yngvason}, The transition to a Giant Vortex Phase in A Fast Rotating Bose-Einstein Condensate, preprint {\it arXiv:1005.0686 [math-ph]} (2010). 

\bibitem[CY]{CY}		\textsc{M. Correggi, J. Yngvason}, Energy and Vorticity in Fast Rotating Bose-Einstein Condensates, {\it J. Phys. A: Math. Theor.} {\bf 41}, 445002 (2008).

\bibitem[Dan]{D} \textsc{I. Danaila}, Three-dimensional Vortex Structure of a Fast Rotating Bose-Einstein Condensate with Harmonic-plus-quartic Confinement, \textit{ Phys. Rev. A} \textbf{72}, 013605 (2005).

\bibitem[DH]{DH} \textsc{I. Danaila, F.Hecht}, A finite element method with mesh adaptivity for computing vortex states in fast-rotating Bose-Einstein condensates, \textit{Journal of Computational Physics} \textbf{229}, 6946--6960 (2010).


\bibitem[DK]{DK} \textsc{I. Danaila, P. Kazemi}, A new Sobolev gradient method for direct minimization of the Gross-Pitaevskii energy with rotation, \textit{SIAM J. Scientific Computing} \textbf{32}, 2447--2467  (2010).

\bibitem[ECH]{ECH}	\textsc{P. Engels, I. Coddington, P.C. Haljan, E.A. Cornell}, Nonequilibrium Effects of Anisotropic Compression Applied to Vortex Lattices in Bose-Einstein Condensates, {\it Phys. Rev. Lett.} {\bf 89} 100403 (2002).

\bibitem[Fe1]{Fe1}	\textsc{A.L. Fetter},  Rotating Vortex Lattice in a Bose-Einstein Condensate Trapped in Combined Quadratic and Quartic Radial Potentials, \emph{Phy. Rev. A} \textbf{64}, 063608 (2001). 

\bibitem[Fe2]{Fe2} 	\textsc{A.L. Fetter}, Rotating Trapped Bose-Einstein Condensates, \emph{Rev. Mod. Phys.} \textbf{81} 647--691 (2009).

\bibitem[FJS]{FJS} 	\textsc{A.L. Fetter, N. Jackson, S. Stringari},  Rapid Rotation of a Bose-Einstein Condensate in a Harmonic Plus Quartic Trap, \emph{Phys. Rev. A} \textbf{71} (2005), 013605. 

\bibitem[FB]{FB} 		\textsc{U.R. Fischer, G. Baym}, Vortex States of Rapidly Rotating Dilute Bose-Einstein Condensates, \emph{Phys. Rev. Lett.} \textbf{90} (2003), 140402.

\bibitem[Fol]{Fol} \textsc{G.B. Folland}, \textit{Harmonic Analysis in Phase Space}, Princeton University Press, Princeton, NJ, 1989.

\bibitem[HCE]{HCE}	\textsc{P.C. Haljan, I. Coddington, P. Engels and E.A. Cornell}, Driving Bose-Einstein Condensate Vorticity with a Rotating Normal Cloud, {\it Phys. Rev. Lett.} {\bf 87} 210403 (2001).

\bibitem[Ho]{Ho} \textsc{T.L. Ho}, Bose-Einstein Condensates with Large Number of Vortices, \textit{Phys. Rev. Lett.} \textbf{87}, 060403 (2001).

\bibitem[KTU]{KTU} 	\textsc{K. Kasamatsu, M. Tsubota, M. Ueda},  Giant Hole and Circular Superflow in a Fast Rotating Bose-Einstein Condensate, \emph{Phys. Rev. A} \textbf{66}, 050606 (2002).

\bibitem[KB]{KB}\textsc{G.M. Kavoulakis and G. Baym}, Rapidly rotating Bose-Einstein Condensates in Anharmonic Potentials, \textit{ New Journ. Phys.} \textbf{5}, 51.1 (2003).

\bibitem[KF]{KF} \textsc{J.K. Kim, A.L. Fetter},  Dynamics of a Rapidly Rotating Bose-Einstein Condensate in a Harmonic Plus Quartic trap, \emph{Phys. Rev. A} \textbf{72} (2005), 023619. 

\bibitem[LP]{LP}\textsc{K. Lu, X.B. Pan}, Eigenvalue problems of Ginzburg-Landau operator in bounded domains, \textit{Journal of Mathematical Physics} \textbf{ 40}, 2647-2670 (1999).

\bibitem[MCW]{MCW}	\textsc{K.W. Madison, F. Chevy, W. Wohlleben and J. Dalibard}, Vortex Formation in a Stirred Bose-Einstein Condensate, {\it Phys. Rev. Lett.} {\bf 84} 806 (2000).

\bibitem[Mar]{Mar} \textsc{A. Martinez}, \textit{Introduction to Semiclassical and Microlocal Analysis}, Springer-Verlag, New-York, 2002.

\bibitem[MAH]{MAH}	\textsc{M.R. Mattews, B.P. Anderson, P.C. Haljan, D.S. Hall, C.E. Wieman and E.A. Cornell}, Vortices in a Bose-Einstein Condensate, {\it Phys. Rev. Lett.} {\bf 83} 2498 (1999).

\bibitem[Nie1]{Nie1} \textsc{F. Nier}, A propos des fonctions Thêta et des réseaux d'Abrikosov,  \textit{Séminaire EDP-Ecole Polytechnique} 2006-2007.

\bibitem[Nie2]{Nie2} \textsc{F. Nier}, Bose-Einstein condensates in the Lowest Landau Level, Hamiltonian dynamics, \textit{Reviews in Mathematical Physics}, \textbf{19}, 101--130 (2007).

\bibitem[RAV]{RAV}	\textsc{C. Raman, J.R. Abo-Shaeer, J.M. Vogels, K. Xu and W. Ketterle}, Vortex Nucleation in a Stirred Bose-Einstein Condensate, {\it Phys. Rev. Lett.} {\bf 87} 210402 (2001).

\bibitem[Rob]{Rob} \textsc{H. Robbins}, A remark on Stirling's formula, \textit{Amer. Math. Monthly} \textbf{62}, (1955).

\bibitem[Rou]{R} 		\textsc{N. Rougerie}, The Giant Vortex State for a Bose-Einstein Condensate in a Rotating Anharmonic Trap: Extreme Rotation Regimes, preprint {\it arXiv:0809.1818v4 [math-ph]} (2009). 

\bibitem[SS]{SS} \textsc{E.Sandier, S.Serfaty}, From the Ginzburg-Landau Model to Vortex Lattice Problems, preprint (2010).

\bibitem[SBC]{Exp2}\textsc{S. Stock, V. Bretin, F. Chevy, and J. Dalibard}, Shape Oscillation of a Rotating Bose-Einstein Condensate, {\it Europhys. Lett.} {\bf 65} 594 (2004).

\bibitem[WBP]{WBP} \textsc{G. Watanabe, G. Baym and C.J. Pethick}, Landau Levels and the Thomas-Fermi Structure of Rapidly Rotating Bose-Einstein Condensates {\it Phys. Rev. Lett.} \textbf{93}, 190401 (2004).







\end{thebibliography}
\end{document}